\SetMathAlphabet{\mathsf}{bold}{\encodingdefault}{\sfdefault}{b}{\updefault}
\SetMathAlphabet{\mathtt}{bold}{\encodingdefault}{\ttdefault}{b}{\updefault}
\SetMathAlphabet{\mathsf}{normal}{\encodingdefault}{\sfdefault}{\mddefault}{\updefault}
\SetMathAlphabet{\mathtt}{normal}{\encodingdefault}{\ttdefault}{\mddefault}{\updefault}
\def\Depth{\operatorname{\mathit{depth}}}
\def\Defect{\operatorname{\mathit{defect}}}
\DeclareMathOperator{\Cr}{cr}
\newtheorem{lemma}{Lemma}[section]
\newtheorem{theorem}[lemma]{Theorem}
\newtheorem{corollary}[lemma]{Corollary}
\numberwithin{figure}{section}
\begin{document}
\begin{titlepage}

\title{Lower Bounds for Electrical Reduction on Surfaces%
\thanks{This work was partially supported by NSF grant CCF-1408763.  We also greatly appreciate the support from Labex Bezout.  The conference version of the paper appears in the Proceedings of the 35th International Symposium on Computational Geometry (SoCG 2019).}}

\author{Hsien-Chih Chang\thanks{
		Department of Computer Science, Duke University, USA.  This work was initiated when the author was affiliated with University of Illinois at Urbana-Champaign.}
		\and
		Marcos Cossarini\thanks{
		Laboratoire d'Analyse et de Mathématiques Appliquées, Université Paris-Est Marne-la-Vallée, France.  This work was initiated when the author was affiliated with Instituto de Matemática Pura e Aplicada, Brazil.}
		\and
		Jeff Erickson\thanks{
		Department of Computer Science, University of Illinois at Urbana-Champaign, USA.
		}}

\date{March 22, 2019}

\maketitle

\begin{abstract}
We strengthen the connections between \emph{electrical transformations} and \emph{homotopy} from the planar setting---observed and studied since Steinitz---to arbitrary surfaces with punctures.
As a result, we improve our earlier lower bound on the number of electrical transformations required to reduce an $n$-vertex graph on surface in the worst case [SOCG 2016] in two different directions.
Our previous $\Omega(n^{3/2})$ lower bound applies only to \emph{facial} electrical transformations on plane graphs with \emph{no terminals}.
First we provide a stronger $\Omega(n^2)$ lower bound when the planar graph has two or more terminals, which follows from a quadratic lower bound on the number of homotopy moves in the annulus.
Our second result extends our earlier $\Omega(n^{3/2})$ lower bound to the wider class of \emph{planar} electrical transformations, which preserve the planarity of the graph but may delete cycles that are not faces of the given embedding.
This new lower bound follows from the observation that the \emph{defect} of the medial graph of a planar graph is the same for all its planar embeddings.
\end{abstract}

\setcounter{page}{0}
\thispagestyle{empty}
\end{titlepage}

\pagestyle{myheadings}
\markboth{Hsien-Chih Chang, Marcos Cossarini, and Jeff Erickson}
		{Lower Bounds for Planar Electrical Reduction on Surface Graphs}


\section{Introduction}

Consider the following set of local operations performed on any graph:
\begin{itemize}\itemsep=0pt
\item \emph{Leaf contraction}: Contract the edge incident to a vertex of degree $1$.
\item \emph{Loop deletion}: Delete the edge of a loop.
\item \emph{Series reduction}: Contract either edge incident to a vertex of degree $2$.
\item \emph{Parallel reduction}: Delete one of a pair of parallel edges.
\item \emph{$\arc{Y}{\Delta}$ transformation}: Delete a degree-$3$ vertex and connect its neighbors with three new~edges.
\item \emph{$\arc{\Delta}{Y}$ transformation}: Delete edges of a $3$-cycle and join its vertices to a new~vertex.
\end{itemize}
These operations and their inverses, which we call \EMPH{electrical transformations} following Colin de Verdière \etal~\cite{cgv-rep-96}, have been used for over a century to analyze electrical networks \cite{k-etscn-1899}.
Steinitz \cite{s-pr-1916,sr-vtp-34} proved that any planar network can be reduced to a single vertex using these operations.
Several decades later, Epifanov \cite{e-rpges-66} proved that any planar graph with two special vertices called \emph{terminals} can be similarly reduced to a single edge between the terminals; simpler algorithmic proofs of Epifanov's theorem were later given by Feo \cite{f-erpns-85}, Truemper \cite{t-drpg-89,t-md-92}, and Feo and Provan \cite{fp-dtert-93}.
These results have since been extended to planar graphs with more than two terminals \cite{g-dtaa-91,gs-tdrpg-11,acgp-frpwg-00,dm-ftpdw-15}
and to some families of non-planar graphs \cite{g-dtaa-91,w-drag-15}.
See Chang's thesis~\cite{c-tcgs-18} for a history of the problem. 

Despite decades of prior work, the complexity of the reduction process is still poorly understood.  Steinitz's proof implies that $O(n^2)$ electrical transformations suffice to reduce any $n$-vertex planar graph to a single vertex; Feo and Provan's algorithm reduces any 2-terminal planar graph to a single edge in $O(n^2)$ steps.
While these are the best upper bounds known, several authors have conjectured that they can be improved \cite{g-dtaa-91,fp-dtert-93,acgp-frpwg-00}.  Without any restrictions on which transformations are permitted, the only known lower bound is the trivial $\Omega(n)$.
However, Chang and Erickson recently proved that if all transformations are required to be \emph{facial}, meaning any deleted cycle must be a face of the given embedding, then reducing a plane graph without terminals to a single vertex requires $\Omega(n^{3/2})$ steps in the worst case~\cite{tangle}.
This is obtained by studying the relation between facial electrical transformations and \emph{homotopy moves}, a set of operations performed on the medial graph of the input.

In this paper, we extend our earlier lower bound for electrical transformations in two directions.
To this end, first we study multicurves on surfaces under electrical and homotopy moves; multicuves are in one-to-one correspondence with medial graphs of graph embeddings.
Specifically, in Section~\ref{S:electric} we prove that the set of \emph{tight} multicurves under electrical moves and under homotopy moves is identical.  As a consequence, any surface-embedded graph can be reduced without ever increasing its number of edges.
Previously such property is only known to hold for plane graphs \cite{nw-kg-00,tangle}.

Next, we consider plane graphs with two terminals.  In this setting, leaf deletions, series reductions, and $Y\arcto\Delta$ transformations that delete terminals are forbidden.
We prove in Section~\ref{S:terminals} that $\Omega(n^2)$ facial electrical transformations are required in the worst case to reduce a 2-terminal plane graph \emph{as much as possible}.  Not every 2-terminal plane graph can be reduced to a single edge between the terminals using only facial electrical transformations.
However, we show that any 2-terminal plane graph can be reduced to a unique minimal graph called a \emph{bullseye} using a finite number of facial electrical transformations.
Our lower bound ultimately relies on a recent $\Omega(n^2)$ lower bound on the number of homotopy moves required to tighten a contractible closed curve in the annulus \cite{untangle}.

In Section \ref{S:planar}, we consider a wider class of electrical transformations that preserve the planarity of the graph, but are not necessarily facial.
Our second main result is that $\Omega(n^{3/2})$ \emph{planar} electrical transformations are required to reduce a planar graph (without terminals) to a single vertex in the worst case.
Like our earlier lower bound for \emph{facial} electrical transformations, our proof ultimately reduces to the study of a certain curve invariant, called the \emph{defect}, of the medial graph of a given \emph{unicursal} plane graph $G$.
A key step in our new proof is the following surprising observation: Although the definition of the medial graph of $G$ depends on the embedding of $G$,
the defect of the medial graph is the same for all planar embeddings of $G$.



\section{Background}
\label{S:background}

\subsection{Types of electrical transformations}

We distinguish between three increasingly general types of electrical transformations in plane graphs: \emph{facial}, \emph{crossing-free}, and \emph{arbitrary}.  (For ease of presentation, we assume throughout the paper that plane graphs are actually embedded on the \emph{sphere} instead of the plane.)

An electrical transformation in a graph $G$ embedded on a surface $\Sigma$ is \EMPH{facial} if any deleted cycle is a face of $G$.  All leaf contractions, series reductions, and $\arc{Y}{\Delta}$ transformations are facial, but loop deletions, parallel reductions, and $\arc{\Delta}{Y}$ transformations may not be facial.
Facial electrical transformations form three dual pairs, as shown in Figure \ref{F:elec-dual}; for example, any series reduction in $G$ is equivalent to a parallel reduction in the dual graph~$G^*$.

\begin{figure}[htb]
\centering
\includegraphics[width=0.75\textwidth]{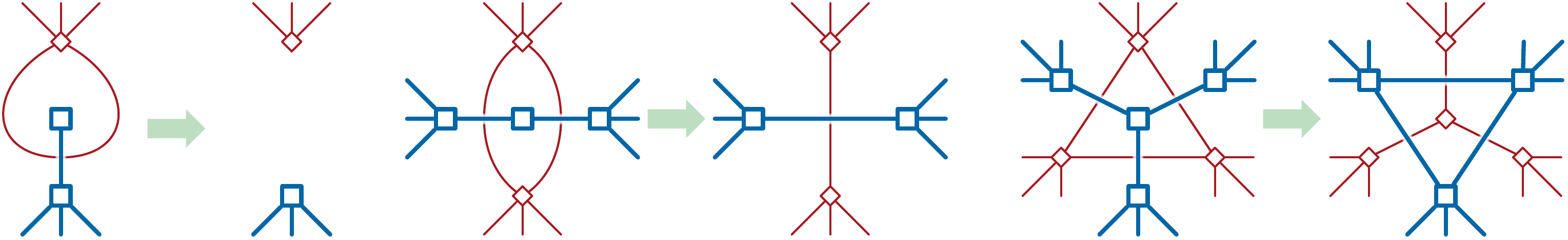}
\caption{Facial electrical transformations in a plane graph $G$ and its dual $G^*$.}
\label{F:elec-dual}
\end{figure}

An electrical transformation in $G$ is \EMPH{crossing-free} if it preserves the embeddability of the underlying graph into the same surface.  Equivalently, an electrical transformation is crossing-free if the vertices of the cycle deleted by the transformation are all incident to a common face of $G$.  All facial electrical transformations are trivially crossing-free, as are all loop deletions and parallel reductions.
If the graph embeds in the plane, crossing-free electrical transformations are also called \EMPH{planar}.
%
The only non-crossing-free electrical transformation is a $\arc{\Delta}{Y}$ transformation whose three vertices are \emph{not} incident to a common face; any such transformation introduces a $K_{3,3}$-minor into the graph, connecting the three vertices of the $\Delta$ to an interior vertex, an exterior vertex, and the new~$Y$ vertex.

\begin{figure}[ht]
\centering
\includegraphics[scale=0.3]{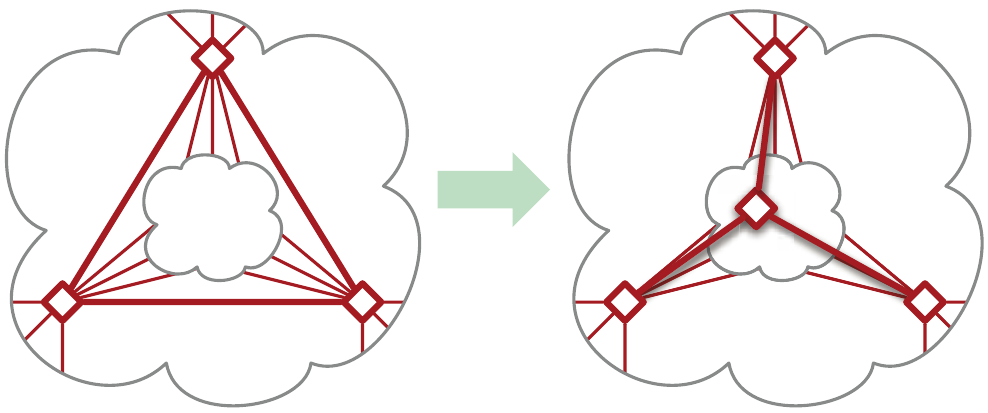}
\caption{A non-planar $\arc{\Delta}{Y}$ transformation.}
\end{figure}

\subsection{Multicurves and medial graphs}

A \EMPH{surface} is a 2-manifold with or without punctures.
Formally, a \EMPH{closed curve} in a surface $\Sigma$ is a continuous map $\gamma \colon S^1 \to \Sigma$.
A~closed curve is \EMPH{simple} if it is injective.  A \EMPH{multicurve} is a collection of one or more closed curves.
We consider only \emph{generic} multicurves, which are injective except at a finite number of (self-)intersections, each of which is a transverse double point.
A multicurve is \EMPH{connected} if its image in the surface is connected.
The image of any (non-simple) multicurve has a natural structure as a 4-regular map, whose \EMPH{vertices} are the self-intersection points of the curves, \EMPH{edges} are maximal subpaths between vertices, and \EMPH{faces} are components of the complement of the curves in the surface.
We do not distinguish between multicurves whose images are combinatorially equivalent maps.

The \EMPH{medial graph $G^\times$} of an embedded graph $G$ is another embedded graph whose vertices correspond to the edges of $G$,
and two vertices of $G^\times$ are connected by an edge if the corresponding edges in~$G$ are consecutive in cyclic order around some vertex, or equivalently, around some face in~$G$.  Every vertex in every medial graph has degree $4$; thus, every medial graph is the image of a multicurve.  Conversely, image of a non-simple multicuvre is the medial graph of some surface-embedded graph if the faces of the multicurve can be two-colored; in particular, when the surface is a sphere, the image of every non-simple multicurve is the medial graph of some plane graph.
We call an embedded graph $G$ \EMPH{unicursal} if its medial graph $G^\times$ is the image of a single closed curve.

\EMPH{Smoothing} a multicurve $\gamma$ at a vertex $x$ replaces the intersection of $\gamma$ with a small neighborhood of $x$ with two disjoint simple paths, so that the result is another 4-regular embedded graph.
There are two possible smoothings at each vertex.  More generally, a \EMPH{smoothing} of $\gamma$ is any multicurve obtained by smoothing a subset of its vertices.
For any embedded graph $G$, the smoothings of the medial graph $G^\times$ are precisely the medial graphs of minors of $G$.

\begin{figure}[ht]
\centering
\includegraphics[scale=0.25]{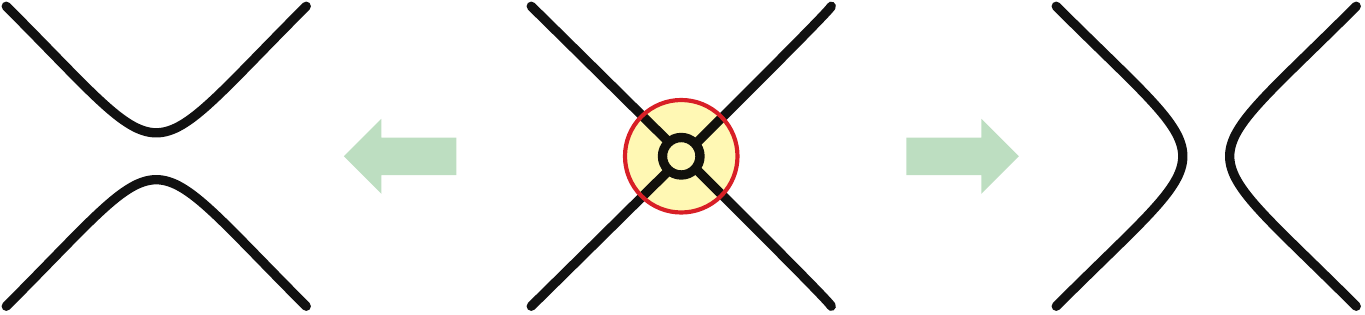}
\caption{Two possible smoothings of a vertex.}
\label{F:smoothing}
\end{figure}

\subsection{Local moves}

A \EMPH{homotopy} between two curves $\gamma$ and $\gamma'$ on the same surface $\Sigma$ is a continuous deformation from one curve to the other, formally defined as a continuous function $H\colon {S^1 \times [0,1] \to \Sigma}$ such that $H(\cdot,0) = \gamma$ and $H(\cdot,1) = \gamma'$.  
The definition of homotopy extends naturally to multicurves.  Classical topological arguments imply that two multicurves are homotopic if and only if one can be transformed into the other by a finite sequence of \EMPH{homotopy moves} (shown in Figure \ref{F:homotopy}).
%
Notice that a $\arc10$ move is applied to an empty \EMPH{loop}, and a $\arc20$ move is applied on an empty \EMPH{bigon}.
A multicurve is \EMPH{homotopically tight} (or \EMPH{h-tight} for short) if no sequence of homotopy moves leads to a multicurve with fewer vertices.

\begin{figure}[ht]
\centering
\includegraphics[width=0.75\linewidth]{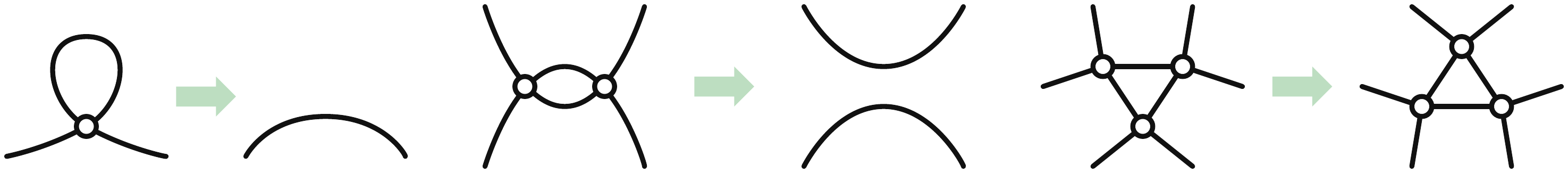}
\caption{Homotopy moves $\arc10$, $\arc20$, and $\arc33$.}
\label{F:homotopy}
\end{figure}

\begin{figure}[ht]
\centering
\includegraphics[width=0.75\textwidth]{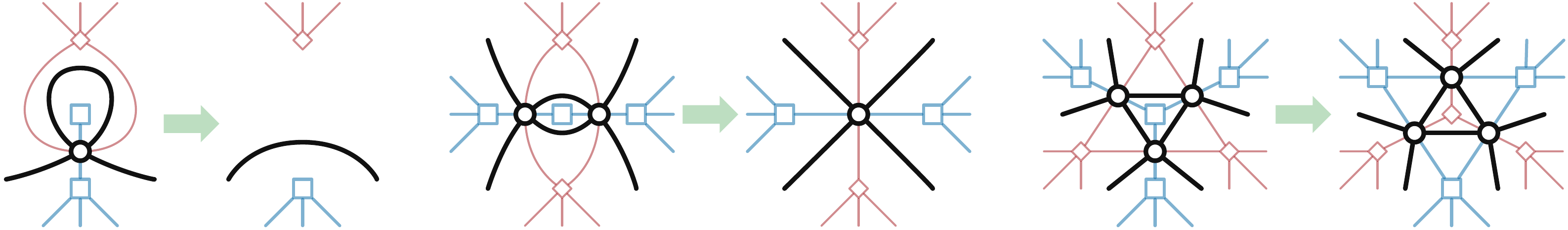}
\caption{Electrical moves $\arc10$, $\arc21$, and $\arc33$.}
\label{F:medial-elec}
\end{figure}

Facial electrical transformations in any embedded graph $G$ correspond to local operations in the medial graph~$G^\times$ that closely resemble homotopy moves.  We call these \EMPH{$\arc10$}, \EMPH{$\arc21$}, and \EMPH{$\arc33$} moves, where the numbers before and after each arrow indicate the number of local vertices before and after the move.
We collectively refer to these operations and their inverses as \EMPH{electrical moves}.
A multicurve is \EMPH{electrically tight} (or \EMPH{e-tight} for short) if no sequence of electrical moves leads to another multicurve with fewer vertices.
For multicurves on surfaces with boundary, both homotopy moves and electrical moves performed on boundary faces are forbidden.
The fact that we use same name \emph{tight} for both homotopy moves and electrical moves is not a coincidence; we will justify its usage in Section~\ref{SS:eq-tight}.

%

\section{Connection between electrical and homotopy moves}
\label{S:electric}

%
For any connected multicurve (or 4-regular embedded graph) $\gamma$ on surface $\Sigma$,
\begin{itemize}
\item let \EMPH{$X(\gamma)$} denote the minimum number of electrical moves required to tighten $\gamma$,
\item let \EMPH{$H^\downarrow(\gamma)$} denote the minimum number of homotopy moves required to tighten $\gamma$, without ever increase the number of vertices; that is, no $\arc01$ and $\arc02$ moves are allowed.
\item let \EMPH{$H(\gamma)$} denote the minimum number of homotopy moves required to tighten $\gamma$.
\end{itemize}


It is not immediately obvious
whether a multicurve $\gamma$ that is tight under monotonic homotopy moves could be further tightened by allowing $\arc 01$ and $\arc 02$ moves or not.
%
Hass and Scott~\cite{hs-scs-94} and de Graaf and Schrijver \cite{gs-mcmcr-97} independently proved that any multicurve $\gamma$ can be tightened using monotonic homotopy moves, which implies that $H^\downarrow(\gamma) = 0$ if and only if $H(\gamma) = 0$.  In other words, (standard) homotopy moves and monotonic homotopy moves share the same set of tight multicurves.
Now $H^\downarrow(\gamma) \ge H(\gamma)$ follows for any multicurve $\gamma$.

\subsection{Smoothing lemma}
\label{SS:smoothing}

We would like to compare $X(\gamma)$ with $H^\downarrow(\gamma)$ and $H(\gamma)$.
The following key lemma follows from close reading of proofs by Truemper~\cite[Lemma~4]{t-drpg-89} and several others~\cite{g-dtaa-91,nt-aafts-96,acgp-frpwg-00,nw-kg-00} that every minor of a {$\Delta$Y}-reducible graph is also {$\Delta$Y}-reducible.
%
A proof to some special cases at the level of medial curves can be found in de Graaf~\cite[Proposition~5.1]{g-gcs-94phdthesis}.
%
For the sake of completeness, we include a proof in Appendix~\ref{S:smoothing-proof}.

\begin{lemma}[Chang and Erickson~{\cite[Lemma~3.1]{tangle}}]
\label{L:smoothing-case}
Let $\gamma$ be any connected multicurve on surface $\Sigma$, and let $\check\gamma$ be a connected smoothing of $\gamma$.
Applying any sequence of $N$ electrical moves to $\gamma$ to obtain $\gamma'$.
Then one can apply a similar sequence of electrical moves of length at most $N$ to $\check\gamma$ to obtain a (possibly trivial) connected smoothing ${\check\gamma}'$ of $\gamma'$.
\end{lemma}

As a remark, using similar argument one can recover a result by Newmann-Coto \cite{n-csgs-01}: any homotopy from multicurve $\gamma$ to another multicurve $\gamma'$ that never removes vertices can be turned into a homotopy from a smoothing of $\gamma$ to a smoothing of $\gamma'$.
Chambers and Liokumovich \cite{cl-chidh-14} studied a similar problem where one wants to convert a homotopy between two \emph{simple} curves on surface into an \emph{isotopy}, without increasing the length of any intermediate curve by too much.
They showed that the desired isotopy can be obtained from a clever Euler-tour argument on the graph of all possible complete smoothings of the intermediate curves.

\medskip

Using Lemma~\ref{L:smoothing-case} one can show that $X(\gamma) \ge H^\downarrow(\gamma)$ for every planar curve $\gamma$,
a result implicit in the work of Noble and Welsh~\cite{nw-kg-00} and formally proved by Chang and Erickson~\cite{tangle}.

\begin{lemma}[Smoothing Lemma~{\cite{tangle}}]
\label{L:smoothing}
$X(\check\gamma) \le X(\gamma)$ for every connected smoothing $\check\gamma$ of every connected multicurve $\gamma$ in the plane.
\end{lemma}


\begin{lemma}[Monotonicity Lemma~{\cite{tangle}}]
\label{L:monotonicity}
For every connected multicurve $\gamma$, there is a minimum-length sequence of electrical moves that simplifies $\gamma$ to a simple closed curve that does not contain $\arc01$ or $\arc12$ moves.
\end{lemma}


\begin{lemma}[Electrical-Homotopy Inequality~{\cite{tangle}}]
\label{L:homotopy}
$X(\gamma) \ge H^\downarrow(\gamma)$ for every planar curve~$\gamma$.
\end{lemma}



\subsection{Equivalence of tightness}
\label{SS:eq-tight}

One of the main obstacles to generalize Lemmas~\ref{L:smoothing}, \ref{L:monotonicity}, and \ref{L:homotopy} to curves on arbitrary surface is that again we do not know \emph{a priori} whether the set of tight multicurves under electrical moves is the same as those under homotopy moves.
Such problem did not exist in the planar setting as all planar multicurves can be tightened to simple curves using either electrical or homotopy moves.
%
We first show that every electrically tight multicurve is also homotopically tight.

\begin{lemma}
\label{L:electric-homotopy}
Let $\gamma$ be a connected multicurve on an arbitrary surface $\Sigma$.  If $\gamma$ is electrically tight, then $\gamma$ is homotopically tight.
\end{lemma}

\begin{proof}
Let $\gamma$ be a connected multicurve in some arbitrary surface, and suppose~$\gamma$ is not homotopically tight.  Results of Hass and Scott~\cite{hs-scs-94} and de Graaf and Schrijver~\cite{gs-mcmcr-97} imply that $\gamma$ can be tightened by a finite sequence of homotopy moves that never increases the number of vertices.
In particular, applying some finite sequence of $\arc{3}{3}$ moves to $\gamma$ creates either an empty loop, which can be removed by a $\arc10$ move, or an empty bigon, which can be removed by either a $\arc20$ move or a $\arc21$ move.  Thus, $\gamma$ is not electrically tight.
\end{proof}

However, for the reverse direction, we don't have a similar monotonicity result for electrical moves on arbitrary surfaces.
A careful reading of the sequence of work by de Graaf and Schrijver~\cite{s-hcscs-89,s-dgshct-91,s-otuok-92,s-cget-92,gs-chscc-95,gs-dgs-97,gs-mcmcr-97} leads to a five-way equivalence that shows the two versions of tightness coincide when the given curve is \emph{primitive}.
Unfortunately their results do not generalize as some of the equivalences break down with the presence of non-primitive counterexamples.
See Appendix~\ref{SS:elec-homo-reduced} for more details.


\paragraph{Routing set.}
Inspired by the routing problem studied by de Graaf and Schrijver~\cite{gs-dgs-97}, we introduce the notion of \emph{routing set}.  Despite its na\"ive look, the routing set satisfies a crucial property that encapsulates the whole difficulty of the problem, which allows us to bypass the heavy machinery developed for the primitive case.
We then use the established equivalence of tightness to derive the monotonicity lemma for electrical moves on arbitrary multicurves.

\def\Route{\operatorname{\mathit{route}}}

For any multicurve $\gamma$, the \EMPH{routing set} of $\gamma$ is the following collection of homotopy classes:
\[
\EMPH{$\Route(\gamma)$} \coloneqq \Set{\Big. \Brack{\check\gamma} \mid \text{$\check\gamma$ is a smoothing of $\gamma$}}.
\]
Each homotopy class in $\Route(\gamma)$ is referred as a \EMPH{route} of $\gamma$.

\begin{lemma}
\label{L:route-invariant}
Routing set of $\gamma$ is invariant under electrical moves for any multicurve $\gamma$.
\end{lemma}

\begin{proof}

Let $\gamma'$ be the multicurve obtained from performing one electrical move to $\gamma$.
Because electrical moves are closed under inverses, we only need to prove that $\Route(\gamma) \subseteq \Route(\gamma')$.

Let $\check\gamma$ be an arbitrary smoothing of $\gamma$; $[\check\gamma]$ is in $\Route(\gamma)$ by definition.
By Lemma~\ref{L:smoothing-case}, one can obtain a smoothing ${\check\gamma}'$ of $\gamma'$ that is at most one electrical move away from $\check\gamma$.%
\footnote{Although Lemma~\ref{L:smoothing-case} is stated with respect to \emph{connected} smoothings, the proof of the lemma (see Appendix~\ref{S:smoothing-proof}) reveals that similar statement holds for arbitrary smoothings by allowing an additional $\arc 00$ move that creates/contracts simple cycles.  In particular, such move does not change the homotopy class of a multicurve.}
In particular, $[{\check\gamma}']$ is in $\Route(\gamma')$.
If ${\check\gamma}'$ is equal to $\check\gamma$ or is obtained from $\check\gamma$ using a $\arc 10$, $\arc 01$, or $\arc 33$ move, then immediately we have $[{\check\gamma}] = [{\check\gamma}']$ to be a route in $\Route(\gamma')$.
If ${\check\gamma}'$ is obtained from $\check\gamma$ using a $\arc 21$ move, consider the multicurve ${\check\gamma}^\circ$ obtained from $\check\gamma$ by performing a $\arc 20$ move (on the same empty bigon) instead.
${\check\gamma}^\circ$ is a smoothing of ${\check\gamma}'$,
which in turn is a smoothing of $\gamma'$.
Because $\arc 20$ is a homotopy move, $[\check\gamma] = [{\check\gamma}^\circ]$ is a route in $\Route(\gamma')$.
Similarly when ${\check\gamma}'$ is obtained from $\check\gamma$ using a $\arc 12$ move, we consider $\check\gamma$ as a smoothing of ${\check\gamma}'$ thus $[\check\gamma]$ is a route in $\Route(\gamma')$.
This concludes the proof.
\end{proof}

The \EMPH{intersection number} of a homotopy class $[\gamma]$ is defined to be the minimum number of vertices among all curves homotopic to $\gamma$.
The \EMPH{main routes} of $\gamma$ are those routes of $\gamma$ that achieve the maximum intersection number.

%

\begin{lemma}
\label{L:eq-tight}
Any homotopically tight multicurve is also electrically tight.
\end{lemma}

\begin{proof}
Assume for contradiction that there is an h-tight multicurve $\gamma$ that is not e-tight.  Tighten $\gamma$ using electrical moves to an e-tight multicurve $\gamma'$ with less number of vertices than $\gamma$.
Now by Lemma~\ref{L:route-invariant} the routing set of $\gamma$ and $\gamma'$ is the same; in particular, $[\gamma']$ is a main route of both $\gamma$ and $\gamma'$.
However since both $\gamma$ and $\gamma'$ are h-tight, the intersection number of $[\gamma]$ is strictly greater than the intersection number of $[\gamma']$ and thus $[\gamma']$ cannot be a main route of $\gamma$, a contradiction.
\end{proof}

\subsection{Monotonicity of electrical moves}
\label{SS:monotone-electric}

As a corollary of Lemma~\ref{L:eq-tight}, we are ready to generalize
the monotonicity lemma (Lemma~\ref{L:monotonicity}) to multicurves on general surfaces.

\begin{lemma}
\label{L:smoothing-surface}
Let $\gamma$ be any connected multicurve $\gamma$ on surface $\Sigma$, and let $\check\gamma$ be a connected smoothing of~$\gamma$, satisfying $\Route(\gamma) = \Route(\check\gamma)$.  Then $X(\check\gamma) \le X(\gamma)$ holds.
\end{lemma}

\begin{proof}
Let $\gamma$ be a connected multicurve with $n(\gamma)$ vertices, and let $\check\gamma$ be a connected smoothing of $\gamma$.  If $X(\gamma)$ equals to zero, then $\gamma$ is both e-tight and h-tight by Lemma~\ref{L:electric-homotopy}.
The fact that $\Route(\gamma) = \Route(\check\gamma)$ implies that $[\gamma]$ is a route of $\check\gamma$ and its intersection number is equal to $n(\gamma)$.  If $\check\gamma$ is a proper smoothing of $\gamma$, then the intersection number of any route of $\check\gamma$ is strictly less then $n(\gamma)$, a contradiction.  As a result, the only smoothing of $\gamma$ satisfying the condition is $\gamma$ itself, and therefore the inequality trivially holds.

Otherwise, applying a minimum-length sequence of electrical moves that tightens $\gamma$.
By Lemma~\ref{L:smoothing-case} there is another sequence of electrical moves of length at most $X(\gamma)$ that tightens $\check\gamma$.
We immediately have $X(\check\gamma) \le X(\gamma)$ and the lemma is proved.
\end{proof}

\begin{lemma}
\label{L:monotonicity-surface}
For any connected multicurve $\gamma$, there is a minimum-length sequence of electrical moves that tightens $\gamma$ that does not contain $\arc01$ or $\arc12$ moves.
\end{lemma}

The proof follows almost verbatim from Lemma~\ref{L:monotonicity} after substituting Lemma~\ref{L:smoothing-surface} for Lemma~\ref{L:smoothing} and applying Lemma~\ref{L:route-invariant}.

\begin{proof}
Consider a minimum-length sequence of electrical moves that tights $\gamma$.
For any integer $i \ge 0$, let $\gamma_i$ denote the result of the first $i$ moves in this sequence.
Minimality of the tightening sequence implies that
$X(\gamma_i)$ decreases as $i$ grows.
Now let $i$ be an arbitrary index such that~$\gamma_i$ is obtained from performing a $\arc01$ or $\arc 12$ move on $\gamma_{i-1}$. Then $\gamma_{i-1}$ is a connected proper smoothing of $\gamma_i$, and by Lemma~\ref{L:route-invariant}, $\Route(\gamma_{i-1}) = \Route(\gamma_{i})$ holds.
Now Lemma~\ref{L:smoothing-surface} implies that $X(\gamma_{i-1}) \le X(\gamma_i)$, a contradiction.
\end{proof}

\section{Two-terminal plane graphs}
\label{S:terminals}

Most applications of electrical reductions, starting with Kennelly's computation of effective resistance~\cite{k-etscn-1899}, designate two vertices of the input graph as \emph{terminals} and require a reduction to a single edge between those terminals.  In this context, electrical transformations that delete either of the terminals are forbidden; specifically:
leaf contractions when the leaf is a terminal,
series reductions when the degree-2 vertex is a terminal, and
$\arc{Y}{\Delta}$ transformations when the degree-3 vertex is a terminal.
%
%
An important subtlety here
is that not every 2-terminal planar graph can be reduced to a single edge using only \emph{facial} electrical transformations.  The simplest bad example is the three-vertex graph shown in Figure~\ref{F:bad-plane-graph}.

\begin{figure}[ht]
\centering
\includegraphics[scale=0.3]{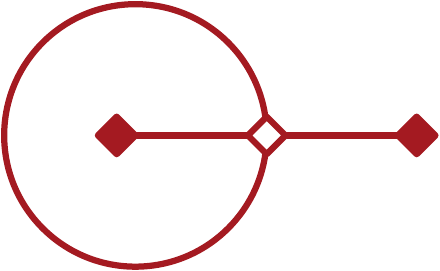}
\caption{A facially irreducible 2-terminal plane graph; solid vertices are the terminals.}
\label{F:bad-plane-graph}
\end{figure}

In this section, we show that in the worst case, $\Omega(n^2)$ facial electrical transformations are required to reduce a 2-terminal plane graph with $n$ vertices \emph{as much as possible}.
The medial graph $G^\times$ of any 2-terminal plane graph $G$ is properly considered as a multicurve embedded in the annulus; the faces of $G^\times$ that correspond to the terminals are removed from the surface.
%
The main strategy is to lower bound $X(G^\times)$ by some function of $H(G^\times)$, then defer to the quadratic lower bound for untangling annular curve using homotopy moves \cite{untangle}.
To this end, we generalize Lemma~\ref{L:homotopy} to annular curves; such result is obtained by the understanding of tight multicurves on the annulus.

First, we prove in Section \ref{SS:tight-annulus} that any annular curve can be tightened to a unique family of curves.
Next in Section~\ref{SS:smoothing-annulus}, we generalize the results by Chang and Erickson~\cite{tangle}, in particular the electrical-homotopy inequality (Lemma~\ref{L:homotopy}), to the annular case.
We prove our quadratic lower bound in Section~\ref{SS:quad-lower}.
Existing algorithms for reducing an arbitrary 2-terminal plane graphs to a single edge rely on an additional operation which we call a \emph{terminal-leaf contraction}, in addition to facial electrical transformations.  We discuss this subtlety in more detail in Section \ref{SS:terminal-leaf}.

\subsection{Tight annular curves}
\label{SS:tight-annulus}

The \EMPH{winding number} of a directed closed curve $\gamma$ in the annulus is the number of times any generic path $\pi$ from one (fixed) boundary component to the other crosses $\gamma$ from left to right, minus the number of times $\pi$ crosses~$\gamma$ from right to left.  Two directed closed curves in the annulus are homotopic if and only if their winding numbers are equal.

The \EMPH{depth} of any multicurve $\gamma$ in the annulus is the minimum number of times a path from one boundary to the other crosses~$\gamma$; thus, depth is essentially an unsigned version of winding number.
%
Just as the winding number around the boundaries is a complete homotopy invariant for curves in the annulus, the depth turns out to be a complete invariant for electrical moves on the annular multicurves.

\begin{lemma}
\label{L:depth}
Electrical moves do not change the depth of any annular multicurve.
\end{lemma}



For any integer $d>0$, let \EMPH{$\alpha_d$} denote the unique closed curve in the annulus with $d-1$ vertices and winding number~$d$.  Up to isotopy, this curve can be parametrized in the plane as
\[
	\alpha_d(\theta)
	~:=~
	\Paren{\big. (\cos(\theta)+2)\cos(d\theta),~ (\cos(\theta)+2)\sin(d\theta) }.
\]
In the notation of our other papers \cite{tangle,annulus}, $\alpha_d$ is the \emph{flat torus knot} $T(d,1)$.

The following lemmas are direct consequences of Lemma~\ref{L:eq-tight}; here we provide simple proofs using only winding number and depth of annular curves.

\begin{lemma}
\label{L:reduced}
For any integer $d>0$, the curve $\alpha_d$ is both h-tight and e-tight.
\end{lemma}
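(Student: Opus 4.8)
The plan is to show that $\alpha_d$ admits no medial electrical move that decreases its vertex count, which (since homotopy moves are a subset of medial electrical moves, and the $\arc10$, $\arc21$, $\arc33$ moves are exactly the length-decreasing or length-preserving local transformations) simultaneously establishes both claims. The curve $\alpha_d$ has $d-1$ vertices, so for $d=1$ and $d=2$ there are zero or one vertices and there is nothing to reduce; assume $d \ge 3$. The key invariant is the winding number: by the discussion preceding the statement, every closed curve in the annulus homotopic to $\alpha_d$ has winding number $d$, and any homotopically reduced curve with winding number $d$ must have at least $d-1$ vertices — indeed, a generic arc $\alpha$ from one boundary to the other must cross $\gamma$ at least $d$ times, and consecutive crossings bound at least one vertex between them along $\gamma$, forcing $\ge d-1$ self-intersections. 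So $\alpha_d$ already meets this lower bound and is therefore homotopically reduced.

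For the electrically reduced claim, I would argue that $\alpha_d$ admits no $\arc10$ move (it has no monogon: every face of $\alpha_d$ is a bigon or the two boundary-adjacent faces, none bounded by a single edge), no $\arc21$ move (there is no bigon that can be removed by an $\arc21$ move — the relevant configuration requires a specific empty bigon whose removal is blocked, which I would check directly from the structure of $\alpha_d$ as the medial graph of the bullseye $B_k$ when $d=2k$, and analogously for odd $d$), and no $\arc33$ move that leads, possibly after further moves, to fewer vertices. The cleanest route is to use depth as the invariant: by Lemma~\ref{L:depth}, medial electrical moves preserve depth, and the depth of $\alpha_d$ equals its winding number $d$ since the curve is ``taut'' — no arc from boundary to boundary crosses it fewer than $d$ times, and this minimum is realized. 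Then any multicurve reachable from $\alpha_d$ by medial electrical moves still has depth $d$, hence (by the same crossing-vs-vertex counting argument as above, which applies to any connected multicurve in the annulus, not just single curves) has at least $d-1$ vertices. Therefore no sequence of medial electrical moves reaches a multicurve with fewer than $d-1$ vertices, so $\alpha_d$ is electrically reduced; the homotopic case follows a fortiori.

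The main obstacle is the lower bound ``depth $d$ forces $\ge d-1$ vertices'' for an arbitrary connected multicurve in the annulus, rather than just for $\alpha_d$ itself. I would prove it by taking a shortest boundary-to-boundary arc $\alpha$ meeting $\gamma$ in exactly $\Depth(\gamma)$ points; walking along $\alpha$, between two consecutive intersection points the curve $\gamma$ together with the subarc of $\alpha$ bounds a disk region, and a short case analysis (using that $\alpha$ is shortest, so it cannot be pushed across any face of $\gamma$) shows this region must contain a vertex of $\gamma$ in its interior or on its boundary, and these vertices are distinct for distinct consecutive pairs. Counting gives $\ge \Depth(\gamma) - 1$ vertices when $\Depth(\gamma) = d$. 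One subtlety to handle carefully is that this argument must account for the connectivity hypothesis and for the possibility that $\gamma$ has several components threading the annulus; I expect the winding numbers of the components to sum (with sign) in a way that still forces the crossing count, and the unsigned depth bound then follows. Once this combinatorial lemma is in place, the statement is immediate.
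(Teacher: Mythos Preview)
Your high-level strategy coincides with the paper's: both argue that any connected multicurve in the annulus with depth $d$ (respectively, winding number $d$) must have at least $d-1$ vertices, so $\alpha_d$, having exactly $d-1$ vertices, cannot be shortened by medial electrical moves (which preserve depth, Lemma~\ref{L:depth}) or by homotopy moves (which preserve winding number).

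Where you diverge is in proving the lower bound ``depth $d \Rightarrow {\ge}\, d-1$ vertices'', which you correctly flag as the crux and leave as a sketch.  The paper dispatches it in one line via Euler's formula: a minimum-crossing boundary-to-boundary arc visits $d+1$ \emph{distinct} faces (any repeated face would allow a shortcut, contradicting minimality), so $F \ge d+1$; since the multicurve is $4$-regular we have $E = 2V$, and on the sphere $V - E + F = 2$ gives $F = V+2$, hence $V \ge d-1$.  This handles single curves and connected multicurves uniformly, with none of the case analysis you anticipate.  Your disk-counting sketch is heading in the same direction but is harder to make rigorous (why does each ``between-crossings'' region trap a \emph{distinct} vertex of $\gamma$?) and is unnecessary once you count faces rather than vertices directly.

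One small slip worth correcting: homotopy moves are \emph{not} a subset of medial electrical moves---the $\arc20$ move is a homotopy move but not a medial electrical move---so your opening parenthetical does not give ``homotopically reduced'' for free from ``electrically reduced''.  Your later argument via the common vertex lower bound is the right way to get both conclusions at once, and that is exactly what the paper does.
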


\begin{proof}
Every connected multicurve in the annulus with either winding number $d$ or depth $d$ has at least $d+1$ faces (including the faces containing the boundaries of the annulus) and therefore, by Euler's formula, has at least $d-1$ vertices.
\end{proof}

\begin{lemma}
\label{L:homotopy-unique}
If $\gamma$ is an h-tight connected annular multicurve, then $\gamma = \alpha_d$ for some $d$.
\end{lemma}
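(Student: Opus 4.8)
The plan is to argue by a combination of Euler's formula and a direct structural analysis of a homotopically reduced connected multicurve $\gamma$ in the annulus. First I would set $d$ to be the winding number of $\gamma$ (choosing an arbitrary orientation of its components so the winding numbers add up); by Lemma~\ref{L:reduced} we know $\alpha_d$ is homotopically reduced and has exactly $d-1$ vertices, so it suffices to show that $\gamma$ has \emph{at most} $d-1$ vertices, since then the proof of Lemma~\ref{L:reduced} (the Euler's-formula count) forces $\gamma$ to have exactly $d-1$ vertices, $d+1$ faces, and $2(d-1)$ edges, and from there the combinatorial type is pinned down. So the crux is: a homotopically reduced connected multicurve in the annulus with winding number $d$ has no more than $d-1$ self-intersections.

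The key step is to understand what ``homotopically reduced'' forbids. If $\gamma$ has a monogon (a face bounded by a single edge, i.e.\ a $\arc10$-reducible configuration) or a bigon (a face bounded by two edges whose two corners are both ``empty'', i.e.\ $\arc20$-reducible), we could reduce, so $\gamma$ has none of these --- except possibly at the two boundary faces of the annulus, where moves are forbidden. I would then show that a reduced $\gamma$ cannot in fact be a genuine multicurve with more than one component: a standard argument (lift to the universal cover, an infinite strip) shows two disjoint closed curves in the annulus that are each non-contractible and reduced relative to each other must cobound a bigon or be disjoint-and-parallel, and disjoint parallel curves would leave an annular face that is neither boundary face, contradicting connectedness --- so $\gamma$ is a single closed curve. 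For a single reduced closed curve, I would invoke the lift to the universal cover: $\gamma$ lifts to a bi-infinite arc that, having no monogons or bigons (equivalently, being ``taut''), is a curve that any two of whose lifts cross at most once; this is exactly the hypothesis under which the number of self-crossings is computed by the winding number, giving $\le d-1$.

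The main obstacle I expect is the boundary: because moves touching the two boundary faces of the annulus are disallowed, ``reduced'' does not literally mean ``no monogons or bigons,'' so the clean universal-cover / taut-curve argument has to be run carefully around those two special faces. Concretely, I would need to check that a would-be monogon or bigon incident to a boundary face does not actually occur in a reduced curve for a different reason --- e.g.\ a monogon at the boundary could be removed by an \emph{isotopy} (not a homotopy move) that does not change the combinatorial type in the sense we care about, or else its presence would let us do a legal move elsewhere. Handling this, together with ruling out ``spiral'' configurations that wind more than necessary, is where the real work lies; the rest is Euler's formula bookkeeping identical to the proof of Lemma~\ref{L:reduced}. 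Once $\gamma$ is shown to be a single curve with $d-1$ crossings and winding number $d$, a short case analysis (or appeal to the classification of such curves in the annulus) identifies it as $\alpha_d$.
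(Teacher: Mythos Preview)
Your plan is correct in outline and would give a valid proof, but it takes a substantially longer route than the paper does. The paper's argument is essentially three citations. First, a homotopically reduced multicurve on any surface has pairwise disjoint, individually reduced constituent curves (this is asserted directly, and follows from de Graaf--Schrijver); hence a \emph{connected} reduced multicurve is a single closed curve. Second, closed curves in the annulus are classified up to homotopy by winding number (Hopf). Third---and this is the key citation---Hass and Scott show that $\alpha_d$ is, up to isotopy, the unique closed curve in the annulus with winding number~$d$ and $d-1$ vertices. Since $\gamma$ is reduced and homotopic to the reduced curve~$\alpha_d$, it has the same number of vertices, and Hass--Scott finishes.

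What you outline is, in effect, a self-contained reproof of the Hass--Scott uniqueness in the special case of the annulus, via lifting to the infinite strip and arguing that distinct lifts of a taut curve cross at most once. This is a legitimate and more elementary alternative; the trade-off is that a one-line citation becomes a page or two of argument about taut arcs in a strip and the combinatorics that force the $\alpha_d$ pattern.

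Two smaller points. First, defining $d$ as ``the winding number of $\gamma$'' before you know $\gamma$ is a single curve is ill-posed: the winding number of a multicurve depends on an orientation choice for each component, and your parenthetical ``so the winding numbers add up'' does not select one. It is cleaner, as the paper does, to establish the single-curve claim first and only then speak of winding number. Second, your worry about monogons or bigons at the two boundary faces is appropriate but should not be the main obstacle. In the universal-cover picture, the lifts of $\gamma$ are properly embedded lines in a strip, and the ``forbidden'' monogons at the boundary downstairs correspond to the ends of the strip rather than to removable bigons between lifts; once you adopt the lift viewpoint, the boundary subtlety is absorbed automatically. The final ``short case analysis'' you allude to---that a reduced curve with winding number~$d$ and $d-1$ crossings must be $\alpha_d$---is precisely the content of the Hass--Scott lemma the paper cites.
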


\begin{proof}
A multicurve in the annulus is h-tight if and only if its constituent curves are h-tight \emph{and disjoint}.  Thus, any \emph{connected} h-tight multicurve is actually a single closed curve.  Any two curves in the annulus with the same winding number are homotopic \cite{h-udtse-35}.  Finally, up to isotopy, $\alpha_d$ is the only closed curve in the annulus with winding number $d$ and $d-1$ vertices \cite[Lemma~1.12]{hs-ics-85}.
\end{proof}


\begin{corollary}
\label{C:electric-unique}
A connected multicurve $\gamma$ in the annulus is e-tight if and only if $\gamma = \alpha_{\Depth(\gamma)}$; therefore, any annular multicurve $\gamma$ is e-tight if and only if $\gamma$ is h-tight.
\end{corollary}


%
%

\subsection{Smoothing lemma in the annulus}
\label{SS:smoothing-annulus}

Equipped with the understanding of tight annular curves, we are ready to extend the results in Section~\ref{SS:smoothing} to the annulus.

\begin{lemma}
\label{L:smoothing-annulus}
For any connected smoothing $\check\gamma$ of any connected multicurve $\gamma$ in the annulus, we have $X(\check\gamma) + \frac{1}{2}\Depth(\check\gamma) \le X(\gamma) + \frac{1}{2}\Depth(\gamma)$.
\end{lemma}

\begin{proof}
Let $\gamma$ be an arbitrary connected multicurve in the annulus, and let $\check\gamma$ be an arbitrary connected smoothing of~$\gamma$.  Without loss of generality, we can assume that $\gamma$ is non-simple, since otherwise the lemma is vacuous.

If $\gamma$ is already e-tight, then $\gamma = \alpha_d$ for some integer $d\ge 2$ by Corollary \ref{C:electric-unique}.  (The curves $\alpha_0$ and~$\alpha_1$ are simple.)
First, suppose $\check\gamma$ is a connected smoothing of $\gamma$ obtained by smoothing a single vertex $x$.
The smoothed curve~$\check\gamma$ contains a single empty loop if $x$ is the innermost or outermost vertex of $\gamma$, or a single empty bigon otherwise.  Applying one $\arc10$ or $\arc20$ move transforms $\check\gamma$ into the curve $\alpha_{d-2}$, which is e-tight by Lemma~\ref{L:reduced}.
Thus we have $X(\check\gamma) = 1$ and $\Depth(\check\gamma) = d-2$, which implies $X(\check\gamma) + \frac{1}{2}\Depth(\check\gamma) = X(\gamma) + \frac{1}{2}\Depth(\gamma)$.
As for the general case when $\check\gamma$ is obtained from $\gamma$ by smoothing more than one vertices, the statement follows from the previous case by induction on the number of smoothed vertices.

If $\gamma$ is not e-tight, applying a minimum-length sequence of electrical moves that tightens $\gamma$ into some curve~$\gamma'$.
By Lemma~\ref{L:smoothing-case} there is another sequence of electrical moves of length at most $X(\gamma)$ that tightens $\check\gamma$ to some connected smoothing $\check\gamma'$ of $\gamma'$, which can be further tightened electrically to an e-tight curve using arguments in the previous paragraph because $\gamma'$ is e-tight.
This implies that $X(\check\gamma) \le X(\gamma) + \frac{1}{2}(\Depth(\gamma') - \Depth(\check\gamma'))$.
By Lemma~\ref{L:depth}, $\gamma$ and $\gamma'$ have the same depth, and $\check\gamma$ and $\check\gamma'$ have the same depth.
Therefore $X(\check\gamma) + \frac{1}{2}\Depth(\check\gamma) \le X(\gamma) + \frac{1}{2}\Depth(\gamma)$ and the lemma is proved.
\end{proof}

\begin{lemma}
\label{L:min-monotonicity}
For every connected multicurve $\gamma$ in the annulus, there is a minimum-length sequence of electrical moves that tightens $\gamma$ to $\alpha_{\Depth(\gamma)}$ without $\arc01$ or $\arc12$ moves.
\end{lemma}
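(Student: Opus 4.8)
The plan is to prove the slightly stronger statement that \emph{no} minimum-length sequence of medial electrical moves reducing $\gamma$ can contain an $\arc01$ or $\arc12$ move. The engine is Lemma~\ref{L:smoothing}, together with the observation that the two vertex-increasing moves are ``undone'' by a smoothing: if $\gamma^{+}$ is obtained from a connected multicurve $\gamma$ by an $\arc01$ move, then smoothing the newly created loop vertex in the one way that keeps the result connected recovers $\gamma$; and if $\gamma^{+}$ is obtained from $\gamma$ by an $\arc12$ move, then smoothing one of the two newly created bigon vertices in the appropriate way again recovers $\gamma$. In either case $\gamma$ is a connected \emph{proper} smoothing of $\gamma^{+}$, so Lemma~\ref{L:smoothing} applies with $\overline{\gamma}=\gamma$ and the ambient curve equal to $\gamma^{+}$.

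First I would record this smoothing observation with a short case analysis on the local pictures in Figure~\ref{F:medial-elec}, checking in each case that the ``appropriate'' smoothing is a legal smoothing and yields exactly $\gamma$: for the loop vertex of an $\arc01$ move the other smoothing splits off an isolated contractible circle and disconnects the curve, while for a bigon vertex of an $\arc12$ move one smoothing reproduces $\gamma$ and the other need not. Now fix a minimum-length reduction $\gamma = \gamma_{0} \to \gamma_{1} \to \cdots \to \gamma_{m}$ with $m = X(\gamma)$; by Corollary~\ref{C:electric-unique} and Lemma~\ref{L:depth} every maximal reduction of $\gamma$ ends at $\alpha_{\Depth(\gamma)}$, so $\gamma_m = \alpha_{\Depth(\gamma)}$. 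Suppose some move $\gamma_{i} \to \gamma_{i+1}$ in this sequence is an $\arc01$ or $\arc12$ move. By the observation, $\gamma_{i}$ is a connected proper smoothing of $\gamma_{i+1}$, so Lemma~\ref{L:smoothing} gives $X(\gamma_{i}) + \tfrac12\Depth(\gamma_{i}) \le X(\gamma_{i+1}) + \tfrac12\Depth(\gamma_{i+1})$; since $\gamma_{i}\to\gamma_{i+1}$ is itself a medial electrical move, Lemma~\ref{L:depth} gives $\Depth(\gamma_{i}) = \Depth(\gamma_{i+1})$, and hence $X(\gamma_{i}) \le X(\gamma_{i+1})$.

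To finish, the suffix $\gamma_{i+1}\to\cdots\to\gamma_{m}$ is a reduction of $\gamma_{i+1}$ of length $m-i-1$, so $X(\gamma_{i+1}) \le m-i-1$, whence $X(\gamma_{i}) \le m-i-1$. Prepending the prefix $\gamma_{0}\to\cdots\to\gamma_{i}$ (length $i$) to a minimum-length reduction of $\gamma_{i}$ (which ends at $\alpha_{\Depth(\gamma_i)} = \alpha_{\Depth(\gamma)}$) yields a reduction of $\gamma$ of length at most $m-1 < m$, contradicting $m = X(\gamma)$. Therefore no minimum-length reduction of $\gamma$ uses an $\arc01$ or $\arc12$ move. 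I expect the only real content to be the local smoothing observation, and in particular the $\arc12$ case --- essentially the reversed versions of the pictures already used in the proof of Lemma~\ref{L:smoothing} --- while the rest is bookkeeping with $X$ and $\Depth$.
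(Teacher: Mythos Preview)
Your proposal is correct and follows essentially the same route as the paper's proof.  Both arguments observe that if some move $\gamma_{i}\to\gamma_{i+1}$ in a minimum-length reduction is $\arc01$ or $\arc12$, then $\gamma_{i}$ is a connected proper smoothing of $\gamma_{i+1}$; combining Lemma~\ref{L:smoothing} with depth invariance gives $X(\gamma_{i})\le X(\gamma_{i+1})$, contradicting minimality.  You spell out the smoothing observation and the contradiction more explicitly than the paper does, and you state the (nominally stronger) conclusion that \emph{no} minimum-length reduction uses such moves --- but the paper's contradiction argument already proves that stronger statement.
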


The proof follows almost verbatim from Lemma~\ref{L:monotonicity} and~\ref{L:monotonicity-surface} after substituting Lemma~\ref{L:smoothing-annulus} for Lemma~\ref{L:smoothing}.

\begin{proof}
Consider a minimum-length sequence of electrical moves that tightens an arbitrary connected multicurve~$\gamma$ in the annulus.  For any integer $i \ge 0$, let $\gamma_i$ denote the result of the first $i$ moves in this sequence.  Suppose~$\gamma_i$ has one more vertex than $\gamma_{i-1}$ for some index $i$.
Then $\gamma_{i-1}$ is a connected proper smoothing of $\gamma_i$, and $\Depth(\gamma_i) = \Depth(\gamma_{i-1})$ by Lemma~\ref{L:depth}; so Lemma~\ref{L:smoothing-annulus} implies that $X(\gamma_{i-1}) \le X(\gamma_i)$, contradicting our assumption that the reduction sequence has minimum length.
\end{proof}

\begin{lemma}
\label{L:ineq}
$X(\gamma) + \frac{1}{2}\Depth(\gamma) \ge H^\downarrow(\gamma) \ge H(\gamma)$ for every closed curve $\gamma$ in the annulus.
\end{lemma}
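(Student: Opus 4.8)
The plan is to prove, by well-founded induction, the slightly more general statement that $X(\gamma) + \tfrac12\Depth(\gamma) \ge H(\gamma)$ for \emph{every} connected multicurve $\gamma$ in the annulus; the lemma is the case where $\gamma$ is a single closed curve. The induction is on the pair $(n(\gamma), X(\gamma))$, ordered lexicographically, where $n(\gamma)$ denotes the number of vertices of $\gamma$. This mirrors our earlier argument that $X(\gamma)\ge H(\gamma)$ on the sphere~\cite[Lemma~3.3]{tangle}, with Lemma~\ref{L:smoothing} replacing the spherical smoothing inequality and Lemma~\ref{L:min-monotonicity} replacing the spherical monotonicity lemma; the new $\tfrac12\Depth$ term is precisely what compensates for the fact that the non-homotopic medial move $\arc21$ removes only one vertex where the homotopy move $\arc20$ removes two.

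If $\gamma$ is homotopically reduced, then $H(\gamma)=0$ and the inequality is immediate; in particular this handles the case $X(\gamma)=0$, since an electrically reduced multicurve is homotopically reduced by Lemma~\ref{L:electric-homotopy}. So assume $\gamma$ is not homotopically reduced. Then, again by Lemma~\ref{L:electric-homotopy}, $\gamma$ is not electrically reduced, so $X(\gamma)\ge 1$, and Lemma~\ref{L:min-monotonicity} provides a minimum-length reduction of $\gamma$ whose moves are all $\arc10$, $\arc21$, or $\arc33$ moves. Let $\gamma\to\gamma_1$ be its first move. Then $\gamma_1$ is again a connected multicurve, $X(\gamma_1)=X(\gamma)-1$, and $\Depth(\gamma_1)=\Depth(\gamma)$ by Lemma~\ref{L:depth}.

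If the first move is $\arc10$ or $\arc33$, it is itself a homotopy move, so $H(\gamma)\le 1+H(\gamma_1)$; since $\gamma_1$ is lexicographically smaller than $\gamma$ (fewer vertices after $\arc10$, the same number of vertices but smaller $X$ after $\arc33$), induction gives $H(\gamma_1)\le X(\gamma_1)+\tfrac12\Depth(\gamma_1)=X(\gamma)-1+\tfrac12\Depth(\gamma)$, and we are done. If the first move is $\arc21$, it deletes an empty bigon face of $\gamma$; because boundary moves are forbidden this bigon is not a boundary face of the annulus, so it bounds a disk, and hence the homotopy move $\arc20$ on the same bigon is available. Let $\gamma''$ be its result: it is connected (an $\arc20$ move does not change the number of components), it has $n(\gamma)-2$ vertices, and inspecting the local pictures one sees that $\gamma''$ is obtained from $\gamma_1$ by smoothing a single vertex — the $\arc20$ move smooths both tips of the bigon while the $\arc21$ move smooths only one. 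Lemma~\ref{L:smoothing} then gives $X(\gamma'')+\tfrac12\Depth(\gamma'')\le X(\gamma_1)+\tfrac12\Depth(\gamma_1)=X(\gamma)-1+\tfrac12\Depth(\gamma)$. Because $\gamma''$ has strictly fewer vertices than $\gamma$, induction applies and yields $H(\gamma'')\le X(\gamma'')+\tfrac12\Depth(\gamma'')$; since $\arc20$ is a homotopy move, $H(\gamma)\le 1+H(\gamma'')\le X(\gamma)+\tfrac12\Depth(\gamma)$, completing the induction.

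The $\arc21$ case is the heart of the matter and the only place the annulus argument genuinely differs from the spherical one. On the sphere one can induct on $X(\gamma)$ alone, because smoothing never increases $X$; in the annulus a smoothing can convert one unit of $X$ into two units of depth, so $X(\gamma'')$ need not drop below $X(\gamma)$, which is exactly why the induction must be measured by vertex count (with $X$ breaking ties for $\arc33$ moves) and why Lemma~\ref{L:smoothing} carries the depth correction. The two small facts to check carefully are that the bigon deleted by a $\arc21$ move is never a boundary face and that $\gamma''$ really is a one-vertex smoothing of $\gamma_1$; both follow by routine inspection of the medial moves together with the face--vertex correspondence of the medial graph, and neither should cause trouble.
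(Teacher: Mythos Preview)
Your argument follows the paper's proof essentially step for step: invoke Lemma~\ref{L:min-monotonicity} to restrict the first move to $\arc10$, $\arc21$, or $\arc33$; dispose of $\arc10$ and $\arc33$ directly as homotopy moves; and in the $\arc21$ case replace it by the $\arc20$ move on the same bigon and apply Lemma~\ref{L:smoothing} to the resulting one-vertex smoothing of~$\gamma_1$.  Your explicit lexicographic induction on $(n(\gamma),X(\gamma))$ is a welcome clarification---the paper says ``induction on $X(\gamma)$'', but in the $\arc21$ branch one does not directly know $X(\gamma'')<X(\gamma)$, only the combined inequality from Lemma~\ref{L:smoothing}, so your ordering (or equivalently induction on $X(\gamma)+\tfrac12\Depth(\gamma)$) is what actually makes the recursion well-founded.

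One caveat about your stated generalization to arbitrary connected multicurves: the justification ``an $\arc20$ move does not change the number of components'' is true for constituent curves, but that is not the same as the \emph{image} remaining connected.  If a connected multicurve has two constituent curves whose only intersections are the two tips of the bigon, the $\arc20$ move separates them.  This does not affect the lemma as stated, because starting from a single closed curve every curve you recurse on---$\gamma_1$ after $\arc10$ or $\arc33$, and $\gamma''$ after $\arc20$---is again a single closed curve, so the induction never leaves that class; this is exactly why the paper restricts to closed curves and uses the phrase ``because $\gamma$ is a single curve, $\gamma^\circ$ is also a single curve''.
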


\begin{proof}
%
Let $\gamma$ be a closed curve in the annulus.  If $\gamma$ is already e-tight, then $X(\gamma) = H^\downarrow(\gamma) = 0$ by Lemma~\ref{L:electric-homotopy} (or Corollary~\ref{C:electric-unique}), so the lemma is trivial.
Otherwise, consider a minimum-length sequence of electrical moves that tightens $\gamma$.  By Lemma~\ref{L:min-monotonicity}, we can assume that the first move in the sequence is neither $\arc01$ nor $\arc12$.  If the first move is $1\arcto 0$ or $3\arcto 3$, the theorem immediately follows by induction on $X(\gamma)$, since by Lemma~\ref{L:depth} neither of these moves changes the depth of the curve.

The only interesting first move is $\arc21$.  Let $\gamma'$ be the result of this $\arc21$ move, and let $\gamma^\circ$ be the result if we perform the $\arc20$ move on the same empty bigon instead.  The minimality of the sequence implies $X(\gamma) = X(\gamma') + 1$, and we trivially have $H^\downarrow(\gamma) \le H^\downarrow(\gamma^\circ) + 1$.  Because~$\gamma$ is a single curve, $\gamma^\circ$ is also a single curve and therefore a connected proper smoothing of $\gamma'$.
%
%
%
Thus, Lemma~\ref{L:depth}, Lemma~\ref{L:smoothing-annulus}, and induction on the number of vertices imply
\begin{align*}
	X(\gamma) + \frac{1}{2}\Depth(\gamma)
	&~=~
	X(\gamma') + \frac{1}{2}\Depth(\gamma') + 1
\\	&~\ge~
	X(\gamma^\circ) + \frac{1}{2}\Depth(\gamma^\circ) + 1
\\	&~\ge~
	H^\downarrow(\gamma^\circ) + 1
\\	&~\ge~
	H^\downarrow(\gamma),
\end{align*}
which completes the proof.
\end{proof}

\subsection{Quadratic lower bound}
\label{SS:quad-lower}

\paragraph{Bullseyes.}
For any $k>0$, let \EMPH{$B_k$} denote the 2-terminal plane graph that consists of a path of length~$k$ between the terminals, with a loop attached to each of the $k-1$ interior vertices, embedded so that collectively they form concentric circles that separate the terminals.  We call each graph $B_k$ a \EMPH{bullseye}.  For example, $B_1$ is just a single edge; $B_2$ is shown in Figure \ref{F:bad-plane-graph}; and $B_4$ is shown on the left in Figure~\ref{F:bullseye4}.
The medial graph $B_k^\times$ of the $k$th bullseye is the curve $\alpha_{2k}$.  Because different bullseyes have different medial depths, Lemma \ref{L:depth} implies that no bullseye can be transformed into any other bullseye by facial electrical transformations.

\begin{figure}[ht]
\centering
\raisebox{-0.5\height}{\includegraphics[scale=0.2]{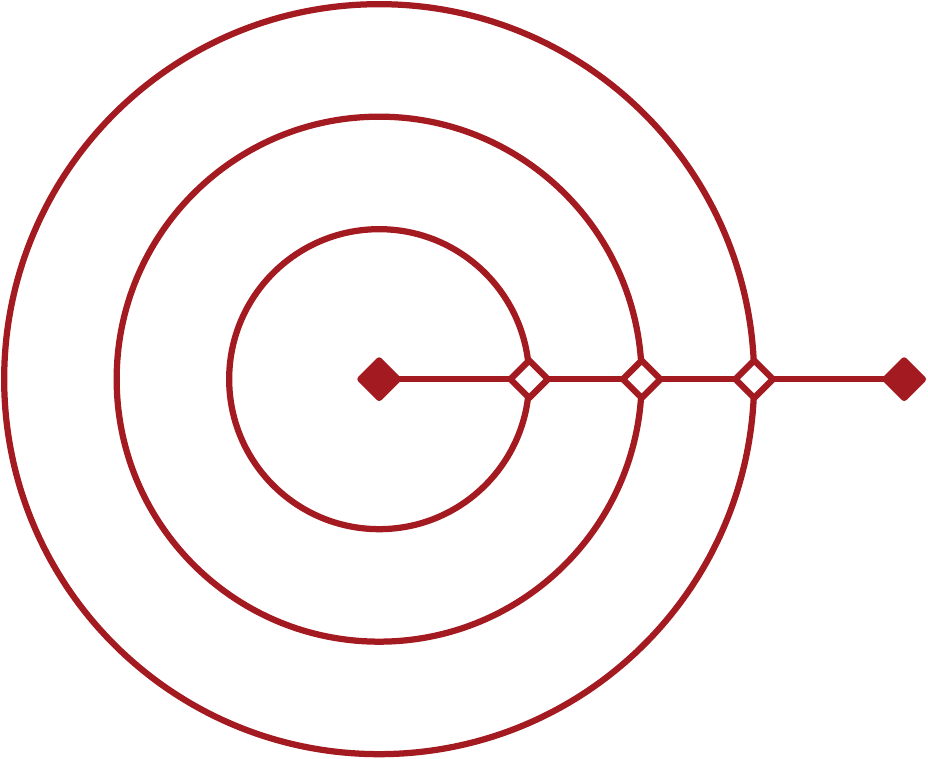}}
\qquad
\raisebox{-0.5\height}{\includegraphics[scale=0.2]{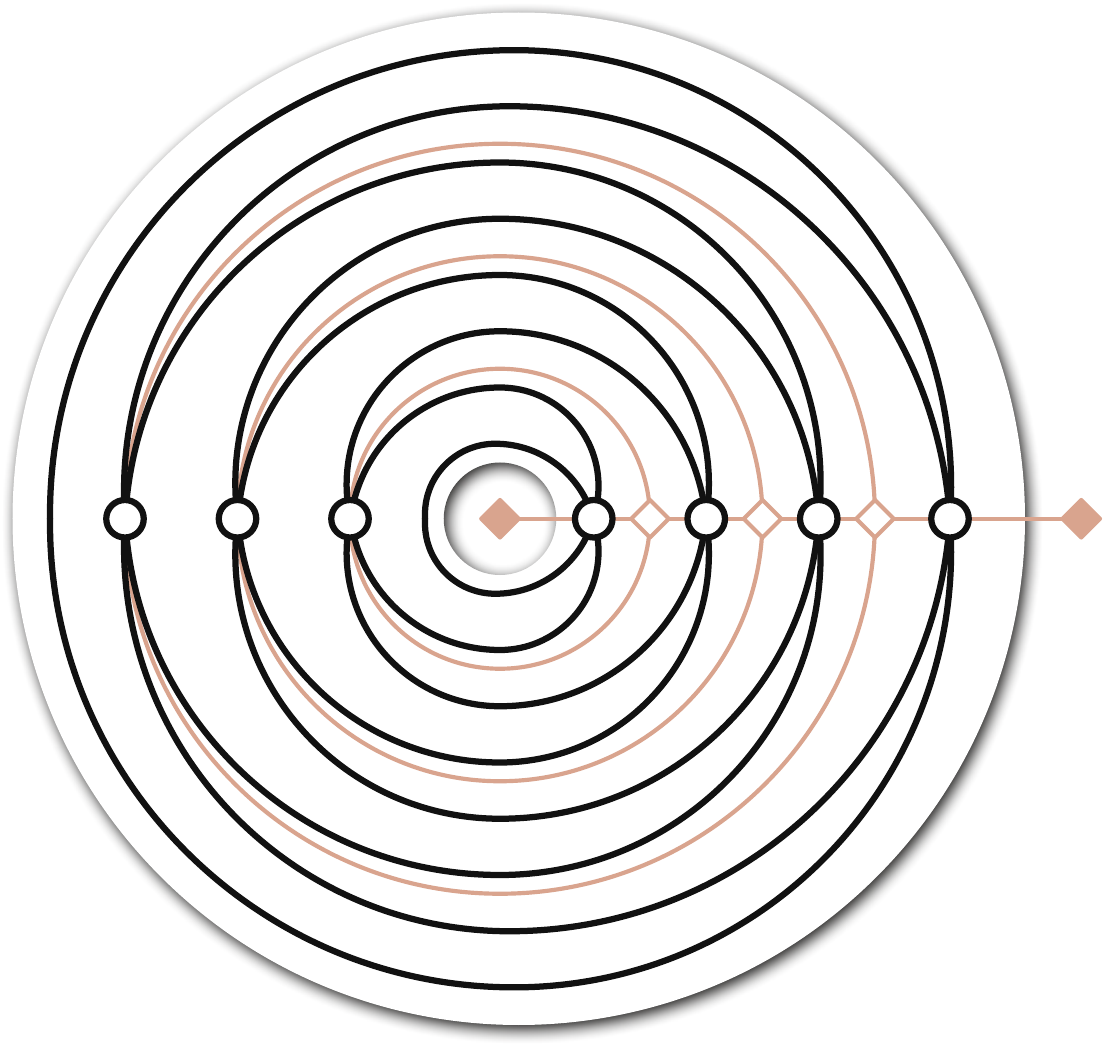}}
\qquad
\raisebox{-0.5\height}{\includegraphics[scale=0.2]{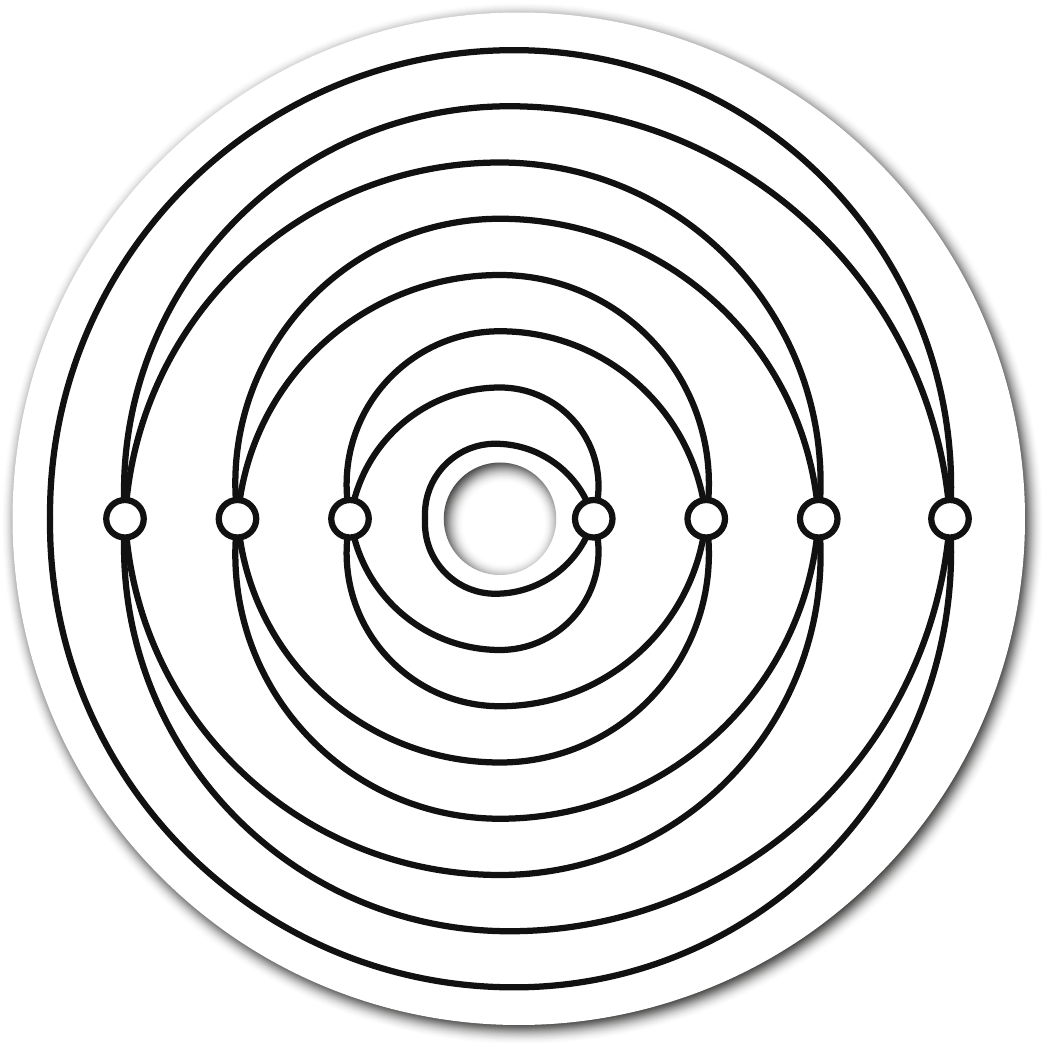}}
\caption{The bullseye graph $B_4$ and its medial graph $\alpha_8$.}
\label{F:bullseye4}
\end{figure}

The following corollary is now immediate from the electrical-homotopy inequality for annular curves (Lemma~\ref{L:ineq}).

%

\begin{theorem}
\label{Th:2-term-reduction}
Let $G$ be a 2-terminal plane graph, and let $\gamma$ be any unicursal smoothing of~$G^\times$.  Reducing~$G$ to a bullseye requires at least $H(\gamma) - \tfrac{1}{2}\Depth(\gamma)$ facial electrical transformations.
\end{theorem}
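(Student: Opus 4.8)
The plan is to translate the statement into the language of medial electrical moves and then feed it to the inequalities already proven for multicurves in the annulus. First I would record that facial electrical transformations in $G$ are in bijection with medial electrical moves in the medial graph $G^\times$, regarded as a connected multicurve in the annulus obtained by deleting the two terminal faces, and that by Corollary~\ref{C:electric-unique} reducing $G^\times$ as far as possible by such moves produces $\alpha_{\Depth(G^\times)}$, the medial graph of the bullseye $B_{\Depth(G^\times)/2}$. So the minimum number of facial electrical transformations that reduce $G$ to a bullseye is exactly $X(G^\times)$, and it suffices to prove
\[
	X(G^\times) + \tfrac12\Depth(\gamma) ~\ge~ H(\gamma)
\]
for every unicursal (single-curve) smoothing $\gamma$ of $G^\times$.

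If $G$ is itself unicursal, take $\gamma = G^\times$; the inequality is then exactly Lemma~\ref{L:ineq}. Otherwise $\gamma$ is a proper smoothing of $G^\times$, and the cheapest approach is to chain Lemma~\ref{L:smoothing} with Lemma~\ref{L:ineq}: the former gives $X(\gamma)+\tfrac12\Depth(\gamma) \le X(G^\times)+\tfrac12\Depth(G^\times)$, and the latter, applicable because $\gamma$ is a single closed curve, gives $H(\gamma) \le X(\gamma)+\tfrac12\Depth(\gamma)$. Composing, $X(G^\times) \ge H(\gamma) - \tfrac12\Depth(G^\times)$, which already proves the theorem whenever $\gamma$ and $G^\times$ have the same depth.

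To get the sharper bound with $\Depth(\gamma)$ in place of $\Depth(G^\times)$ — a genuine improvement, since smoothing can strictly decrease depth — I would interleave the inductions behind Lemmas~\ref{L:smoothing} and~\ref{L:ineq} rather than compose them. Fix a minimum-length medial electrical reduction of $G^\times$, assumed by Lemma~\ref{L:min-monotonicity} to have no $\arc01$ or $\arc12$ moves, and induct on $X(G^\times)$. For the first move $G^\times \to \mu'$, the case analysis of Lemma~\ref{L:smoothing} (Figure~\ref{F:smooth-moves}) yields a single closed curve that is a smoothing of $\mu'$ and, using $X(\mu') = X(G^\times)-1$, lets the inductive hypothesis do the work in every case except when the induced move on $\gamma$ is a $\arc21$; there, exactly as in Lemma~\ref{L:ineq}, I substitute the $\arc20$ homotopy move on the same bigon to obtain a single closed curve $\gamma^\circ$ that is a proper smoothing of $\mu'$ with $H(\gamma) \le H(\gamma^\circ)+1$ and $\Depth(\gamma^\circ) \le \Depth(\gamma)$, and then apply the inductive hypothesis to $\gamma^\circ$. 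The base case $X(G^\times)=0$ forces $G^\times = \alpha_d$ with $d = \Depth(G^\times)$, so one needs the auxiliary inequality $H(\gamma) \le \tfrac12\Depth(\gamma)$ for unicursal smoothings $\gamma$ of $\alpha_d$; this follows from the explicit description of the smoothings of $\alpha_d$ used in the base case of Lemma~\ref{L:smoothing} together with Lemma~\ref{L:reduced}, by induction on the number of smoothed vertices.

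The main obstacle is the last two points: one must check that every subcase of the case analysis of Lemma~\ref{L:smoothing} survives the $\arc21 \mapsto \arc20$ substitution — in particular that the tracked curve and its homotopy variant $\gamma^\circ$ remain connected, hence unicursal — and that the depth of the tracked curve stays pinned to $\Depth(\gamma)$ along the whole reduction (using Lemma~\ref{L:depth} and the fact, implicit in the proof of Lemma~\ref{L:smoothing}, that none of the tracked moves change depth), so that the quantity charged is $\Depth(\gamma)$ rather than $\Depth(G^\times)$. The base-case inequality, while elementary, also requires a careful inspection of how nested loops and bigons in smoothings of $\alpha_d$ reduce.
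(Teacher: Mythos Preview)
The paper gives no explicit proof of this theorem; it is stated as an immediate consequence of the preceding lemmas, parallel to Theorem~\ref{Th:lowerbound}. The intended one-line argument is exactly the chaining you describe first: the number of facial electrical transformations reducing $G$ to a bullseye is $X(G^\times)$; Lemma~\ref{L:smoothing} gives $X(G^\times)+\tfrac12\Depth(G^\times)\ge X(\gamma)+\tfrac12\Depth(\gamma)$; and Lemma~\ref{L:ineq} gives $X(\gamma)+\tfrac12\Depth(\gamma)\ge H(\gamma)$.

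You have correctly noticed that this chaining yields $\Depth(G^\times)$ rather than the $\Depth(\gamma)$ printed in the theorem, and that since smoothing can only decrease depth, the printed bound is the stronger of the two. The paper supplies no additional argument for the sharper form. In its sole application (the corollary that follows), the bad curves from \cite{untangle} \emph{are} medial graphs, so one takes $\gamma=G^\times$ and the discrepancy vanishes; even for proper smoothings both depths are $O(n)$ against $H(\gamma)=\Omega(n^2)$, so the distinction is immaterial for the corollary.

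Your interleaved induction toward the sharper bound is a reasonable plan, and the obstacles you flag are the right ones. One correction, though: you write that the tracked curve must ``remain connected, hence unicursal,'' but connectedness of a multicurve does not imply that it is a single closed curve, and Lemma~\ref{L:ineq} requires the latter. The case analysis behind Lemma~\ref{L:smoothing} only guarantees that the intermediate smoothing stays \emph{connected}; verifying that it (or the $\arc20$-substituted variant $\gamma^\circ$) stays \emph{unicursal} through every subcase of Figure~\ref{F:smooth-moves}---including the subcases where the first move on $G^\times$ touches one of the possibly several vertices smoothed to obtain $\gamma$---is the actual content of the sharper argument and is not covered by the existing lemmas.
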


Chang~\etal~\cite{untangle} presented an infinite family of contractible curves in the annulus parametrized by their number of vertices $n$ that require $\Omega(n^2)$ homotopy moves to tighten.
Every contractible curve is the medial graph of some 2-terminal plane graph (because they have even depth and thus the faces can be two-colored~\cite{t-k1-1876}).  Euler's formula implies that every $n$-vertex curve in the annulus has exactly $n+2$ faces (including the boundary faces) and therefore has depth at most $n+1$.

\begin{corollary}
Reducing a 2-terminal plane graph to a bullseye requires $\Omega(n^2)$ facial electrical transformations in the worst case.
\end{corollary}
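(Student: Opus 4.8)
The plan is to read the corollary off Theorem~\ref{Th:2-term-reduction} together with the quadratic lower bound on homotopy moves established in the companion paper~\cite{untangle}. That paper supplies, for infinitely many $n$, a contractible closed curve $\gamma_n$ in the annulus with exactly $n$ vertices for which $H(\gamma_n) = \Omega(n^2)$. All that remains is to package $\gamma_n$ as the medial graph of a 2-terminal plane graph and to verify that the correction term $\tfrac{1}{2}\Depth(\gamma_n)$ appearing in Theorem~\ref{Th:2-term-reduction} is of lower order.

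First I would realize each $\gamma_n$ as $G_n^\times$ for a 2-terminal plane graph $G_n$. Since $\gamma_n$ is a single generic closed curve, its image is a $4$-regular map on the sphere, hence the medial graph of some plane graph $G_n$, which is therefore unicursal. Because $\gamma_n$ is contractible, its winding number around the annulus is $0$, and since the number of crossings of a generic boundary-to-boundary arc with $\gamma_n$ has the same parity as that winding number, $\Depth(\gamma_n)$ is even; taking the two faces of $\gamma_n$ that contain the annulus boundary components to be the terminal faces then makes $G_n$ a genuine 2-terminal plane graph whose medial graph in the annulus is $\gamma_n$. The graph $G_n$ has $n$ edges, hence at most $n+1$ vertices by Euler's formula (and $\Theta(n)$ vertices for the explicit family), so any lower bound of $\Omega(n^2)$ moves is also $\Omega(n^2)$ in the number of vertices of $G_n$; by the classification in Section~\ref{SS:bullseye}, $G_n$ reduces to the bullseye $B_k$ with $2k = \Depth(\gamma_n)$.

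Next I would bound the defect crudely. Euler's formula on the sphere gives an $n$-vertex curve, which has $2n$ edges, exactly $n+2$ faces; two of these contain the annulus boundary components, so a shortest boundary-to-boundary path meets $\gamma_n$ at most $n+1$ times and $\Depth(\gamma_n) \le n+1$. Applying Theorem~\ref{Th:2-term-reduction} to $G_n$ with $\gamma = \gamma_n$ — the trivial unicursal smoothing of $G_n^\times$ — then shows that reducing $G_n$ to a bullseye requires at least
\[
    H(\gamma_n) - \tfrac{1}{2}\Depth(\gamma_n) \;\ge\; H(\gamma_n) - \tfrac{1}{2}(n+1) \;=\; \Omega(n^2)
\]
facial electrical transformations, which proves the corollary.

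The only genuinely nontrivial ingredient is the $\Omega(n^2)$ bound on $H(\gamma_n)$ imported from~\cite{untangle}; reproving that bound would be the whole difficulty in a self-contained account. Everything else — the parity argument showing $\Depth(\gamma_n)$ is even, the crude estimate $\Depth(\gamma_n)\le n+1$ that keeps $\tfrac{1}{2}\Depth(\gamma_n)$ lower-order, and the identification of $\gamma_n$ with a 2-terminal medial graph so that Theorem~\ref{Th:2-term-reduction} applies — is routine bookkeeping.
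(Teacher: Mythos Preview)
Your proposal is correct and follows essentially the same argument as the paper: invoke the $\Omega(n^2)$ homotopy lower bound from \cite{untangle} for a family of contractible annular curves, observe that contractibility forces even depth so each curve is the medial graph of a 2-terminal plane graph, bound $\Depth(\gamma_n)\le n+1$ via Euler's formula, and apply Theorem~\ref{Th:2-term-reduction}. You are slightly more careful than the paper in spelling out the parity argument for even depth and in reconciling the vertex count of $\gamma_n$ with the vertex count of $G_n$, but the strategy is identical.
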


\subsection{Terminal-leaf contractions}
\label{SS:terminal-leaf}

The electrical reduction algorithms of Feo\cite{f-erpns-85}, Truemper \cite{t-drpg-89}, and Feo and Provan~\cite{fp-dtert-93} rely exclusively on facial electrical transformations, plus one additional operation.
\begin{itemize}
\item
\emph{Terminal-leaf contraction}: Contract the edge incident to a \emph{terminal} vertex with degree $1$.  The neighbor of the deleted terminal becomes a new terminal.
\end{itemize}
Terminal-leaf contractions are also called \emph{FP-assignments}, after Feo and Provan \cite{g-dtaa-91, gs-tdrpg-11, dm-ftpdw-15}.
Later algorithms for reducing plane graphs with three or four terminals~\cite{gs-tdrpg-11,acgp-frpwg-00,dm-ftpdw-15} also use only facial electrical transformations and terminal-leaf contractions.

Formally, terminal-leaf contractions are \emph{not} electrical transformations, as they can change the value one wants to compute.  For example, if the edges in the graph shown in Figure~\ref{F:bad-plane-graph} represent $1\Omega$ resistors, a terminal-leaf contraction changes the effective resistance between the terminals from $2\Omega$ to~$1\Omega$.
However, both Gilter \cite{g-dtaa-91} and Feo and Provan \cite{fp-dtert-93} observed that any sequence of facial electrical transformations and terminal-leaf contractions can be simulated on the fly by a sequence of \emph{planar} electrical transformations.
Specifically, we simulate the first leaf contraction at either terminal by simply marking that terminal and proceeding as if its unique neighbor were a terminal.
Later electrical transformations involving the neighbor of a marked terminal may no longer be facial, but they will still be planar; terminal-leaf contractions at the unique neighbor of a marked terminal become series reductions.  At the end of the sequence of transformations, we perform a final series reduction at the unique neighbor of each marked terminal.

Unfortunately, terminal-leaf contractions change both the depth of the medial graph and the curve invariants that imply the quadratic homotopy lower bound.  As a result, our quadratic lower bound proof breaks down if we allow terminal-leaf contractions.

\section{Planar electrical transformations}
\label{S:planar}

Finally, we extend our earlier $\Omega(n^{3/2})$ lower bound for reducing plane graphs%
---\emph{without} terminals using only facial electrical transformations---to the larger class of \emph{planar} electrical transformations.
Recall that a plane graph $G$ \emph{unicursal} if its medial graph $G^\times$ is the image of a single closed curve.
As in our earlier work \cite{tangle}, we analyze electrical transformations in an unicursal plane graph $G$ in terms of a certain invariant of the medial graph of $G$ called \EMPH{defect}, first introduced by Aicardi~\cite{a-tc-94} and Arnold~\cite{a-tipcc-94, a-pctip-94}.
Our extension to non-facial electrical transformations is based on the following surprising observation:  Although the medial graph of $G$ depends on its embedding, the \emph{defect} of the medial graph of $G$ does not.

\begin{theorem}
\label{Th:flip-defect}
Let $G$ and $H$ be planar embeddings of the same abstract planar graph.  If $G$ is unicursal, then~$H$ is unicursal and $\Defect(G^\times) = \Defect(H^\times)$.
\end{theorem}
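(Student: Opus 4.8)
The plan is to reduce the theorem to the single-move case using Lemma~\ref{L:navigate} and then invoke the two tangle-flip lemmas. Since $G$ is unicursal, $G^\times$ is a single closed curve, so its image is connected and hence $G$ itself is connected; thus Lemma~\ref{L:navigate} applies, yielding a finite sequence of split reflections, cut reflections, and cut eversions that turns the embedding $G$ into the embedding $H$. It therefore suffices to prove the following: if one of these three operations carries a planar embedding $G_0$ with unicursal medial graph to a planar embedding $G_1$, then $G_1$ is unicursal and $\Defect(G_1^\times) = \Defect(G_0^\times)$. The full statement follows by induction on the length of the navigation sequence.

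First I would fix such a single move, performed across a (possibly degenerate) split curve $\sigma$. Following the discussion preceding the theorem, I would choose an embedding of the closed curve $G_0^\times$ in which every medial vertex lies on its corresponding edge of $G_0$ and every medial edge crosses $\sigma$ at most once; then $\sigma$ meets at most four edges of $G_0^\times$, so the tangle of $G_0^\times$ inside $\sigma$ has at most two strands. The key observation is that reflecting (or everting) the part of $G_0$ enclosed by $\sigma$ simultaneously reflects the enclosed part of $G_0^\times$, and that this is exactly a flip of the interior tangle of $G_0^\times$; moreover, since the medial graph is determined combinatorially by the local rotation at each vertex and face, the flipped curve is, up to the combinatorial equivalence under which we identify multicurves, precisely $G_1^\times$. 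Applying Lemma~\ref{L:flip1} when the tangle has one strand, and Lemma~\ref{L:flip2} when it has two, shows that $G_1^\times$ is again a single closed curve---so $G_1$ is unicursal---with $\Defect(G_1^\times) = \Defect(G_0^\times)$.

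The step I expect to require the most care is the correspondence between the combinatorial moves of Lemma~\ref{L:navigate} and tangle flips of the medial graph, especially for cut eversions, where $\sigma$ passes through a single cut vertex twice. One must verify that in a neighborhood of each vertex that $\sigma$ crosses (once for a cut reflection, twice for a cut eversion, once at each of the two vertices for a split reflection), $\sigma$ crosses exactly the medial edges joining the edge-slots of $G_0$ that $\sigma$ locally separates---so the number of crossed medial edges is at most four in every case---and that the restriction of the reflection or eversion of $G_0$ to a neighborhood of the disk bounded by $\sigma$ agrees with the combinatorial reflection of the corresponding tangle of $G_0^\times$, so that it genuinely matches the notion of ``flip'' from Section~\ref{S:planar} (in particular, strand endpoints are sent to distinct strand endpoints, which is what keeps the result a single closed curve). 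Once this local picture is nailed down, the remainder is just chaining Lemmas~\ref{L:flip1}, \ref{L:flip2}, and \ref{L:navigate}, exactly as the paragraph before the theorem statement anticipates.
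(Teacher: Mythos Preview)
Your proposal is correct and follows essentially the same approach as the paper: reduce to a single embedding move via Lemma~\ref{L:navigate}, observe that each such move induces a tangle flip of $G^\times$ with at most two strands, and then invoke Lemmas~\ref{L:flip1} and~\ref{L:flip2}. Your explicit observation that unicursality of $G^\times$ forces $G$ to be connected (so that Lemma~\ref{L:navigate} applies) and your inductive framing are slightly more careful than the paper's one-line ``immediately imply,'' but the content is the same.
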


The main goal of the section is to prove Theorem~\ref{Th:flip-defect}.

\subsection{Defect}

Let $\gamma$ be an arbitrary closed curve on the sphere.  Choose an arbitrary basepoint $\gamma(0)$ and an arbitrary orientation for $\gamma$.  For any vertex~$x$ of $\gamma$, we define $\sgn(x) = +1$ if the first traversal through~$x$ crosses the second traversal from right to left, and $\sgn(x) = -1$ otherwise.
Two vertices $x$ and~$y$ are \EMPH{interleaved}, denoted \EMPH{$x\between y$}, if they alternate in cyclic order---$x$, $y$, $x$, $y$---along $\gamma$.  Finally, following Polyak~\cite{p-icfgd-98}, we can define
\[
	\Defect(\gamma) ~\coloneqq~ -2 \sum_{x\between y} \sgn(x)\cdot\sgn(y),
\]
where the sum is taken over all interleaved pairs of vertices of $\gamma$.

Trivially, every simple closed curve has defect zero.  Straightforward case analysis \cite{p-icfgd-98} implies that the defect of a curve does not depend on the choice of basepoint or orientation.
Moreover, any homotopy move changes the defect of a curve by at most $2$; see the paper by Chang and Erickson~\cite[Section 2.1]{tangle} for an explicit case breakdown.  Defect is also preserved by any homeomorphism from the sphere
to itself, including reflection.

\subsection{Navigating between planar embeddings}

\paragraph{Short history of planar embeddings.}
A classical result of Adkisson \cite{a-ccccc-30} and Whitney \cite{w-cgcg-32} is that every 3-connected planar graph has an essentially unique planar embedding.
Mac Lane \cite{m-scpcg-37} described how to count the planar embeddings of any biconnected planar graph, by decomposing it into its triconnected components.
Stallmann \cite{s-upqtp-85,s-cpe-93} and Cai \cite{c-cepgd-93} extended Mac Lane's algorithm to arbitrary planar graphs, by decomposing them into biconnected components.  Mac Lane's decomposition is also the basis of the SPQR-tree data structure of Di Battista and Tamassia \cite{dt-ipt-89,dt-opt-96}, which encodes all planar embeddings of an arbitrary planar graph.

Whitney~\cite{w-2g-1933,t-w2tg-80} showed
that any planar embedding of a 2-connected planar graph $G$ can be transformed into any other embedding by a finite sequence of \EMPH{split reflections}, defined as follows.
A \EMPH{split curve} is a simple closed curve $\sigma$ whose intersection with the embedding of $G$ consists of two vertices $x$ and $y$; without loss of generality, $\sigma$ is a circle with $x$ and $y$ at opposite points.  A split reflection modifies the embedding of $G$ by reflecting the subgraph inside $\sigma$ across the line through $x$ and $y$.

\begin{lemma}
Let $G$ be an arbitrary 2-connected planar graph.  Any two planar embeddings of $G$ can be transformed into one other by a finite sequence of split reflections.
\end{lemma}

To navigate among the planar embeddings of \emph{arbitrary} connected planar graphs, we need two additional operations.  First, we allow split curves that intersect $G$ at only a single cut vertex; a \EMPH{cut reflection} modifies the embedding of $G$ by reflects the subgraph inside such a curve.
More interestingly, we also allow degenerate split curves that pass through a cut vertex $x$ of $G$ \emph{twice}, but are otherwise simple and disjoint from $G$.  The interior of a degenerate split curve $\sigma$ is an open topological disk.
A \EMPH{cut eversion} is a degenerate split reflection that everts the embedding of the subgraph of $G$ inside such a curve, intuitively by mapping the interior of $\sigma$ to an open circular disk (with two copies of $x$ on its boundary), reflecting the interior subgraph, and then mapping the resulting embedding back to the interior of $\sigma$.  Structural results of Stallman \cite{s-upqtp-85,s-cpe-93} and Di Battista and Tamassia \cite[Section 7]{dt-opt-96} imply the following.

\begin{figure}[ht]
\centering
\includegraphics[scale=0.25]{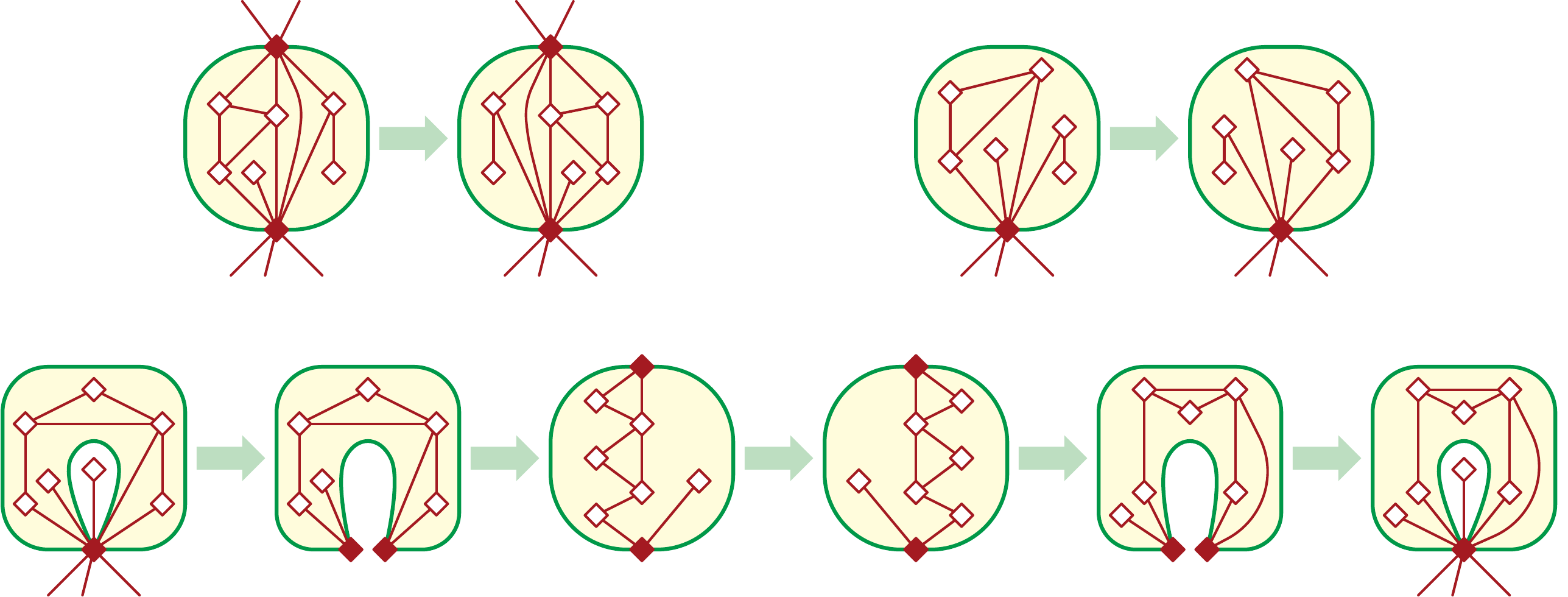}
\caption{Top row: A regular split reflection and a cut reflection.  Bottom row: a cut eversion.}
\end{figure}

\begin{lemma}
\label{L:navigate}
Let $G$ be an arbitrary connected planar graph.  Any planar embedding of $G$ can be transformed into any other planar embedding of $G$ by a finite sequence of split reflections, cut reflections, and cut eversions.
\end{lemma}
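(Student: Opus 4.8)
The plan is to reduce to the preceding lemma by induction on the block--cut tree of $G$.  Recall that the \emph{blocks} of $G$ are its maximal 2-connected subgraphs together with its bridges (viewed as $K_2$ blocks), and that a planar embedding of $G$ on the sphere is determined, up to homeomorphism, by a planar embedding of each block together with, at each cut vertex $v$, a description of how the blocks through $v$ are cyclically arranged around $v$ and into which corner of the rest of the drawing each pendant block is inserted.  The three operations in the statement are tailored to exactly these degrees of freedom: a split reflection changes the embedding of a single block, a cut reflection reverses a contiguous run of blocks around a cut vertex, and a cut eversion transports a pendant subgraph attached at a cut vertex from one incident corner to another corner bounding the same face.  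So the lemma should follow from (a)~the preceding lemma, which handles a single block, and (b)~the fact that cut reflections and cut eversions realize every arrangement at a cut vertex.

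Concretely, I would peel off one leaf block at a time.  If $G$ has at most one block the statement is trivial or is the preceding lemma.  Otherwise, pick a leaf $B$ of the block--cut tree, let $v$ be the unique cut vertex of $G$ lying in $B$, and set $H = G - (V(B)\setminus\{v\})$, so that $H$ is connected with fewer blocks.  Given two planar embeddings $\mathcal{E}_1$ and $\mathcal{E}_2$ of $G$, I would argue in three phases.  First, apply the preceding lemma to the 2-connected graph $B$ to convert $\mathcal{E}_1|_B$ into $\mathcal{E}_2|_B$: since $H$ meets $B$ only at $v$, it lies entirely on one side of every split curve of $B$, and because reflecting the inside of a curve and reflecting its outside produce the same embedding of the sphere, each of these split reflections can be carried out on the $H$-free side, leaving the embedding of $H$ untouched.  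Second, invoke the induction hypothesis to convert the (now stable) restriction to $H$ into $\mathcal{E}_2|_H$; every split reflection, cut reflection, or cut eversion of $H$ lifts to a move of the same type in $G$ with the same defining curve, carrying the pendant block $B$ along with whichever side it currently occupies.  Third, repair the placement of $B$, which now sits in some corner of $v$: a short, explicit combination of cut reflections --- used to slide $B$ past the intervening blocks around $v$ while cancelling the incidental reflections those blocks pick up --- followed by a cut eversion that moves $B$ into the correct corner of the correct face, restores $B$ to its position in $\mathcal{E}_2$, and one more round of split reflections as in the first phase undoes any internal reflection of $B$ introduced along the way.  At that point the embedding agrees with $\mathcal{E}_2$ up to homeomorphism of the sphere.

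The hard part is the bookkeeping in the last two phases: lifting a move of $H$, or re-seating $B$, unavoidably disturbs other pieces of the drawing --- typically by reflecting a block or by relocating $B$ only to an ``adjacent'' corner --- so the argument must interleave cleanup moves that cancel these side effects, and one has to check that such cleanup is always available and that the whole process terminates.  This is exactly what the structural analyses of Stallmann~\cite{s-upqtp-85,s-cpe-93} and Di Battista and Tamassia~\cite{dt-opt-96} supply; the cleanest fully rigorous route is to phrase the three phases in terms of their decomposition of $G$ into biconnected and triconnected components (the SPQR-tree), in which split reflections account for reflecting and permuting skeleton components while cut reflections and cut eversions account for the remaining freedom at cut vertices, so that reachability reduces to the already-established 2-connected case together with a finite case analysis at each cut vertex.
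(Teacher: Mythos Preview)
The paper does not actually prove this lemma: it is stated as an immediate consequence of the structural results of Stallmann~\cite{s-upqtp-85,s-cpe-93} and Di Battista and Tamassia~\cite[Section~7]{dt-opt-96}, with no further argument. Your proposal is therefore strictly \emph{more} detailed than what appears in the paper, and in its final paragraph you too defer to exactly those references for the rigorous case analysis, so in substance the two are aligned.

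That said, a couple of points in your outline deserve tightening if you want it to stand on its own. First, your description of a cut eversion as an operation that ``transports a pendant subgraph \dots\ from one incident corner to another corner bounding the same face'' does not match the paper's definition: a cut eversion is a \emph{reflection} of the subgraph inside a degenerate split curve through a cut vertex, not a relocation. Relocating a pendant block into a different corner is an \emph{effect} you can obtain by combining cut eversions with cut reflections, but you should not build that into the definition. Second, in Phase~2 you assert that every move of $H$ lifts to a move of the same type in $G$ with the same defining curve; this needs the observation that the defining curve can always be chosen to avoid $B$ (or, equivalently, that $B$ lies entirely in one complementary region), which is true because $B$ meets $H$ only at $v$, but should be said explicitly. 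With those two clarifications your block--cut-tree induction is a sound sketch, and the residual ``bookkeeping'' you flag is precisely what the cited SPQR-tree analysis handles.
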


\subsection{Tangle flips}

\def\TightenIn{\Cap}
\def\TightenOut{\Cup}

Now consider the effect of the operations stated in Lemma~\ref{L:navigate} on the medial graph $G^\times$.
By assumption, $G$ is unicursal so that $G^\times$ is a single closed curve.  Let $\sigma$ be any (possibly degenerate) split curve for $G$.  Embed $G^\times$ so that every medial vertex lies on the corresponding edge in $G$, and every medial edge intersects $\sigma$ at most once.
By the Jordan curve theorem, we can assume without loss of generality that $\sigma$ is a circle, and that the intersection points $\gamma\cap\sigma$ are evenly spaced around $\sigma$.
A \EMPH{tangle} of $\gamma$ is the intersection of $\gamma$ with either disk bounded by $\sigma$; each tangle consists of one or more subpaths of $\gamma$ called \EMPH{strands}.  We arbitrarily refer to the two tangles defined by $\sigma$ as the \emph{interior} and \emph{exterior} tangles of $\sigma$.
Split curve $\sigma$ intersects at most four edges of~$G^\times$, so the tangle of $G^\times$ inside $\sigma$ has at most two strands.
Moreover, reflecting (or everting) the subgraph of $G$ inside $\sigma$ induces a \EMPH{flip} of this tangle of $G^\times$.
Any tangle can be \emph{flipped} by reflecting the disk containing it, so that each strand endpoint maps to a different strand endpoint; see Figure \ref{F:flip}.  Straightforward case analysis implies that flipping any tangle of~$G^\times$ with at most two strands transforms $G^\times$ into another closed curve; see Figure~\ref{F:flip-cases}.

\begin{figure}[ht]
\centering
\includegraphics[scale=0.3]{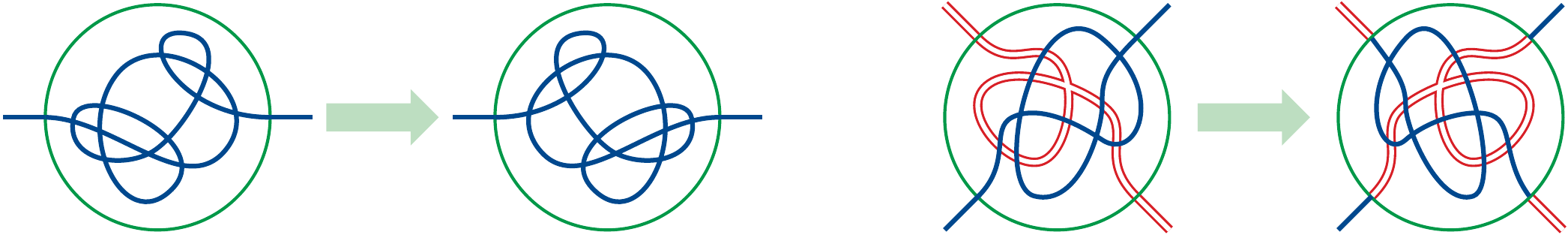}
\caption{Flipping tangles with one and two strands.}
\label{F:flip}
\end{figure}

\begin{lemma}
\label{L:flip1}
Let $\gamma$ be an arbitrary closed curve on the sphere.  Flipping any tangle of $\gamma$ with one strand yields another closed curve $\gamma'$ with $\Defect(\gamma') = \Defect(\gamma)$.
\end{lemma}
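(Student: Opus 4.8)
The plan is to show that flipping a one-strand tangle can be realized by a homeomorphism of the sphere, so that invariance of defect under homeomorphisms does the work. First I would observe that a tangle with a single strand is, combinatorially, a simple arc $\beta$ inside the disk $D$ bounded by $\sigma$, together with the rest of the curve $\gamma$ living outside $D$. The single strand enters and leaves $D$ at exactly two points of $\sigma$; inside $D$, it divides the disk into two sub-disks. Flipping the tangle reflects $D$; since $\beta$ has its two endpoints on $\sigma$, the reflection can be chosen to swap those two endpoints, and it maps $\beta$ to another simple arc with the same endpoints.

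The key step is that this reflection of $D$ extends to a homeomorphism of the whole sphere that is the identity outside $D$. This is where I expect the only real subtlety: a reflection of $D$ does not in general fix $\sigma$ pointwise, so naively it will not glue with the identity on the exterior. To handle this I would instead argue as follows. Because the strand is simple and has both endpoints on $\sigma$, we may first tighten the interior tangle (a sequence of homotopy moves, which changes defect by a controlled amount — but in fact, since the strand is already simple, tightening does nothing, so defect is literally unchanged). A tight one-strand tangle is, up to homeomorphism of $D$ fixing $\sigma$ pointwise, a single unknotted chord. Flipping a single unknotted chord produces the same chord up to an ambient homeomorphism of $D$ that \emph{is} the identity on $\partial D = \sigma$ — one can rotate the disk by $\pi$ about the axis through the two endpoints, or equivalently compose two reflections. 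This composite homeomorphism extends by the identity to all of $S^2$, and by the homeomorphism-invariance of defect stated at the end of the Defect subsection, $\Defect(\gamma') = \Defect(\gamma)$.

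Alternatively — and this may be the cleaner route to write up — I would give the direct combinatorial argument using the formula $\Defect(\gamma) = -2\sum_{x\between y}\sgn(x)\sgn(y)$. Flipping a one-strand tangle does not create or destroy any vertex of $\gamma$, and every vertex lies either entirely inside $D$ or entirely outside $D$ (a vertex is a transverse self-crossing of $\gamma$, and $\sigma$ meets $\gamma$ away from its vertices, so no vertex sits on $\sigma$). The flip reverses the orientation of the single strand inside $D$. For a pair $x,y$ both outside $D$: their signs and interleaving are computed from the exterior part of $\gamma$, which is untouched, so their contribution is unchanged. For a pair $x,y$ both inside $D$: reversing the traversal direction of the strand through $D$ reverses the order in which the two passes through each of $x$ and $y$ occur, which flips $\sgn(x)$ and flips $\sgn(y)$ simultaneously, leaving $\sgn(x)\sgn(y)$ unchanged; and a reflection preserves the cyclic interleaving pattern $x,y,x,y$. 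A pair with one vertex inside and one outside cannot be interleaved at all: two vertices are interleaved only if, reading $\gamma$ cyclically, their four appearances alternate, which forces $\gamma$ to cross $\sigma$ at least four times between them if they are separated by $\sigma$ — but a one-strand tangle only crosses $\sigma$ twice. Hence no term in the sum changes, and $\Defect(\gamma') = \Defect(\gamma)$.

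The main obstacle is the last point: carefully justifying that no interleaved pair straddles $\sigma$ when the tangle has only one strand, and that reversing the inner strand's orientation flips both signs of an inner interleaved pair rather than doing something more complicated. Both reduce to small picture-level case checks about how the two traversals of a crossing sit relative to the two crossings of $\sigma$, essentially the same style of case analysis the paper already invokes for homotopy moves; I would present these as a short figure-supported argument rather than an exhaustive enumeration. I would also double-check the orientation/basepoint independence already granted by Polyak, so that choosing the basepoint outside $D$ is legitimate and makes "reverse the inner strand" a well-defined operation on the cyclic word.
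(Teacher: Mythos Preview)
Your first approach rests on a false premise: a one-strand tangle is \emph{not} in general a simple arc.  The simple closed curve $\sigma$ meets $\gamma$ in exactly two points, so there is a single interior strand $\alpha$, but $\alpha$ may cross itself arbitrarily many times inside the disk; those self-crossings are exactly the interior vertices whose contribution to $\Defect$ you must track.  Consequently ``tightening does nothing'' is wrong, the strand does not divide the disk into two sub-disks, and the rotate-by-$\pi$ homeomorphism argument collapses.  (You implicitly acknowledge this in your second approach when you analyse interior--interior interleaved pairs, so the proposal is internally inconsistent.)

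Your second, combinatorial route is essentially the paper's proof.  The paper writes $\gamma=\alpha\cdot\beta$ and $\gamma'=\mathit{rev}(\phi(\alpha))\cdot\beta$, observes that no vertex of $\alpha$ can be interleaved with a vertex of $\beta$ (your ``at least four crossings of $\sigma$'' argument is a fine way to see this), and then checks that interior signs and interleaving are preserved.  One correction to your version: you claim that reversing the interior strand flips each interior sign, so that the product $\sgn(x)\sgn(y)$ is unchanged.  In fact each interior sign is \emph{individually} unchanged: reversal swaps the order of the two passes (flipping the sign once), and the reflection $\phi$ is orientation-reversing (flipping it back).  Your conclusion about the product survives either way, but the cleaner statement---and the one you would need if you ever had to compare an interior sign against something not inside the disk---is the paper's: both the local orientation and the traversal order change, so the sign is preserved.
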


\begin{proof}
Let $\sigma$ be a simple closed curve that crosses $\gamma$ at exactly two points.  These points decompose $\sigma$ into two subpaths $\alpha\cdot\beta$, where $\alpha$ is the unique strand of the interior tangle and $\beta$ is the unique strand of the exterior tangle.
Let $\Sigma$ denote the interior disk of $\sigma$, and let $\phi\colon \Sigma\to\Sigma$ denote the homeomorphism that flips the interior tangle.  Flipping the interior tangle yields the closed curve  $\gamma' \coloneqq \emph{rev}(\phi(\alpha))\cdot \beta$, where $\emph{rev}$ denotes path reversal.

No vertex of $\alpha$ is interleaved with a vertex of $\beta$; thus, two vertices in $\gamma'$ are interleaved if and only if the corresponding vertices in $\gamma$ are interleaved.
Every vertex of $\emph{rev}(\phi(\alpha))$ has the same sign as the corresponding vertex of $\alpha$, since both the orientation of the vertex and the order of traversals through the vertex changed.  Thus, every vertex of $\gamma'$ has the same sign as the corresponding vertex of $\gamma$.  We conclude that $\Defect(\gamma') = \Defect(\gamma)$.
\end{proof}

A tangle is \EMPH{tight} if each strand is simple and each pair of strands crosses at most once.  Any tangle can be \EMPH{tightened}---that is, transformed into a tight tangle---by continuously deforming the strands without crossing $\sigma$ or moving their endpoints, and therefore by a finite sequence of homotopy moves.
Let \EMPH{$\gamma\TightenIn\sigma$} and \EMPH{$\gamma\TightenOut\sigma$} denote the closed curves that result from tightening the interior and exterior tangles of $\sigma$, respectively.%
\footnote{We recommend pronouncing $\TightenIn$ as “tightened inside” and $\TightenOut$ as “tightened outside”; note that the symbols $\Cap$ and $\Cup$ resemble the second letters of “inside” and “outside”.}
The following lemma that flipping any 2-strand tangle does not change its defect follows from our inclusion-exclusion formula for defect~\cite[Lemma~5.4]{defect}; we give a simpler proof here to keep the paper self-contained.

\begin{lemma}
\label{L:flip2}
Let $\gamma$ be an arbitrary closed curve on the sphere.  Flipping any tangle of $\gamma$ with two strands yields another closed curve $\gamma'$ with $\Defect(\gamma') = \Defect(\gamma)$.
\end{lemma}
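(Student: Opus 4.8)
The plan is to mimic the structure of the proof of Lemma~\ref{L:flip1}, but the two-strand case is genuinely more delicate because now vertices of the interior tangle \emph{can} be interleaved with vertices of the exterior tangle, so the defect no longer splits cleanly as a sum over the two tangles. To organize the bookkeeping, I would first set up notation: let $\sigma$ cross $\gamma$ at four points, so the exterior tangle has two strands with four endpoints on $\sigma$. The key structural observation is that, because $\gamma$ is a single closed curve and flipping produces another single closed curve, there are only a small number of combinatorial possibilities for how the four endpoints are paired up by the two interior strands and by the two exterior strands, and how these pairings connect into one cycle. I would enumerate these (up to the dihedral symmetry of the four points on $\sigma$ and the interior/exterior symmetry), reducing to a couple of essential cases.

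The core of the argument is to track how the defect sum $\Defect(\gamma) = -2\sum_{x\between y}\sgn(x)\sgn(y)$ changes under the flip, classifying interleaved pairs $\{x,y\}$ by how many of $x,y$ lie in the flipped (say, interior) tangle: zero, one, or two. For pairs with both vertices exterior, nothing changes — neither interleaving nor signs — exactly as in Lemma~\ref{L:flip1}. For pairs with both vertices interior, I would argue as in the one-strand case: the flip is a reflection composed with a reversal of the relevant portion of the curve, so both the cyclic interleaving pattern among interior vertices and (by the same "orientation flips, traversal order flips" cancellation) their signs are preserved, and interleaving between two interior vertices is detected entirely within the interior tangle. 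The subtle case is a \emph{mixed} pair, with one vertex $x$ interior and one vertex $y$ exterior. Here $\sgn(y)$ is unchanged and $\sgn(x)$ is unchanged (same reason as before), so the only thing that could change is whether $x\between y$ holds. I would show that whether a fixed exterior vertex $y$ interleaves with a fixed interior vertex $x$ depends only on which of the four arcs of $\sigma$-endpoints-separated strands $x$ sits on — equivalently, on the "side" data that the flip permutes — and then verify, by the case analysis on the pairing pattern, that the flip permutes the interior strands in a way that preserves the total signed count $\sum_{x} \sgn(x)$ over each relevant block of interior vertices, so the net contribution of mixed pairs to the defect is unchanged.

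The main obstacle I anticipate is precisely this mixed-pair analysis: making rigorous the claim that "interleaving of an interior vertex with an exterior vertex is determined by coarse strand-position data" and then checking that the flip does not disturb the relevant signed sums. Concretely, I expect to need the fact that a two-strand flip can be realized as a composition of at most two one-strand-style reflections, or else a direct combinatorial argument about the four strand-endpoint positions; either way the verification is a finite but somewhat tedious enumeration. I would present it as a figure (analogous to Figure~\ref{F:flip}) showing the two essential two-strand patterns with their flips, and then assert that in each the mixed-pair contributions cancel, deferring the exhaustive check to the figure and a short paragraph of case analysis, exactly in the style the paper already uses for the homotopy-move defect computations. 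Once the mixed case is handled, the conclusion $\Defect(\gamma') = \Defect(\gamma)$ follows by summing the three classes of contributions.
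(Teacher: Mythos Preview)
Your high-level decomposition into interior–interior, exterior–exterior, and mixed pairs matches the paper, and you correctly identify the mixed case as the crux.  But the paper handles the mixed pairs very differently, and your direct approach has a real gap in the interior–interior case.

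For mixed pairs, the paper does not try to show their contribution is \emph{preserved} under the flip; it shows the contribution is \emph{zero}.  Writing $\gamma = \alpha\cdot\delta\cdot\beta\cdot\e$ with $\alpha,\beta$ interior and $\delta,\e$ exterior, an interior vertex $x$ and an exterior vertex $y$ are interleaved if and only if $x\in\alpha\cap\beta$ and $y\in\delta\cap\e$.  Hence the mixed contribution factors as
\[
  -2\Bigl(\sum_{x\in\alpha\cap\beta}\sgn(x)\Bigr)\Bigl(\sum_{y\in\delta\cap\e}\sgn(y)\Bigr).
\]
Each $\arc20$ move removes one positive and one negative crossing between the two strands, so tightening either tangle leaves the corresponding sum unchanged; but once both tangles are tight, the fact that $\gamma$ is a \emph{single} closed curve forces at least one of $\alpha\cap\beta$ or $\delta\cap\e$ to be empty.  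Thus one factor vanishes and the mixed contribution is zero.  This is a cleaner and more robust argument than tracking how the flip permutes strand-position data.

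More importantly, your interior–interior argument inherits the one-strand reasoning too naively.  In the two-strand case the flip may exchange $\alpha$ and $\beta$, reverse them, or both; in particular it is \emph{not} always ``reflection composed with reversal'' on each strand.  For a self-crossing of $\alpha$ under an exchange-type flip, the reflection reverses local orientation but the traversal order of the two passes through that vertex does \emph{not} swap, so the sign of that individual vertex genuinely flips.  Your ``orientation flips, traversal order flips'' cancellation therefore fails, and individual terms $\sgn(x)\sgn(y)$ are not preserved.  The paper sidesteps this entirely: once mixed pairs contribute zero, one has $\Defect(\gamma) = \Defect(\gamma\Cap\sigma) + \Defect(\gamma\Cup\sigma)$, where $\gamma\Cap\sigma$ and $\gamma\Cup\sigma$ are obtained by tightening the interior and exterior tangles respectively.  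Then $\gamma'\Cap\sigma = \gamma\Cap\sigma$ by inspection, and $\gamma'\Cup\sigma$ is a reflection of $\gamma\Cup\sigma$ as a curve on the sphere, so the previously stated invariance of defect under sphere homeomorphisms (including reflections) finishes the proof.  This global reflection argument is what replaces the per-vertex sign bookkeeping you were attempting.
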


\begin{proof}
Let $\sigma$ be a simple closed curve that crosses $\gamma$ at exactly four points.  These four points naturally decompose $\gamma$ into four subpaths $\alpha \cdot \delta \cdot \beta \cdot \e$, where $\alpha$ and $\beta$ are the strands of the interior tangle of~$\sigma$, and $\delta$ and $\e$ are the strands of the exterior tangle.
Flipping the interior tangle either exchanges $\alpha$ and $\beta$, reverses $\alpha$ and $\beta$, or both; see Figure~\ref{F:flip-cases}.  In every case, the result is a single closed curve $\gamma'$.
We classify each vertex of $\gamma$ as \emph{interior} if it lies on $\alpha$ and/or $\beta$, and \emph{exterior} otherwise.  Similarly, we classify pairs of interleaved vertices are either interior, exterior, or mixed.

\begin{figure}[ht]
\centering
\includegraphics[width=0.75\linewidth]{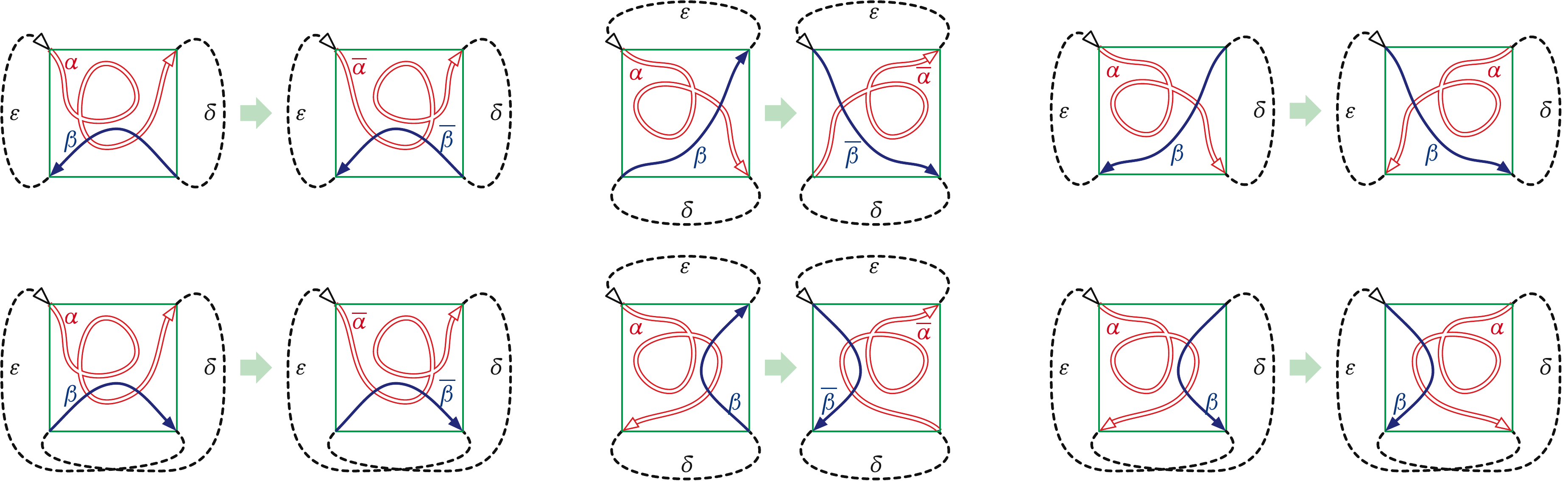}
\caption{Flipping all six types of 2-strand tangle.}
\label{F:flip-cases}
\end{figure}


An interior vertex~$x$ and an exterior vertex $y$ are interleaved if and only if $x$ is an intersection point of $\alpha$ and $\beta$ and $y$ is an intersection point of $\delta$ and $\e$.  Thus, the total contribution of mixed vertex pairs to Polyak's formula $\Defect(\gamma) = -2\sum_{x\between y} \sgn(x)\cdot\sgn(y)$ is
\[
	-2 \sum_{x\in \alpha\cap\beta} \sum_{y\in \delta\cap\e} \sgn(x)\cdot\sgn(y)
	~=~
	-2 \left(\sum_{x\in \alpha\cap\beta}\sgn(x)\right)
			\left(\sum_{y\in \delta\cap\e}\sgn(y)\right).
\]
Consider any sequence of homotopy moves that tightens the interior tangle with strands $\alpha$ and $\beta$.  Any $\arc20$ move involving both $\alpha$ and $\beta$ removes one positive and one negative vertex; no other homotopy move changes the number of vertices in $\alpha\cap\beta$ or the signs of those vertices.  Thus, tightening $\alpha$ and~$\beta$ leaves the sum $\sum_{x\in \alpha\cap \beta} \sgn(x)$  unchanged.
Similarly, tightening the exterior tangle $\delta\cup\e$ leaves the sum $\sum_{y\in \delta\cap\e} \sgn(y)$ unchanged.  But after tightening both tangles, either $\alpha$ and $\beta$ are disjoint, or $\delta$ and $\e$ are disjoint, or both, as $\gamma$ is a single closed curve.  Thus, at least one of the sums $\sum_{x\in \alpha\cap\beta}\sgn(x)$ and $\sum_{y\in \delta\cap\e}\sgn(y)$ is equal to zero.  We conclude that mixed vertex pairs do not contribute to the defect.

The curve $\gamma\TightenIn\sigma$ obtained by tightening $\alpha$ and $\beta$ has at most one interior vertex (and therefore no interior vertex pairs); the exterior vertices of $\gamma\TightenIn\sigma$ are precisely the exterior vertices of $\gamma$.  Similarly, the curve $\gamma\TightenOut\sigma$ obtained by tightening both $\delta$ and $\e$ has at most one exterior vertex; the interior vertices of $\gamma\TightenOut\sigma$ are precisely the interior vertices of $\gamma$.  It follows that
$\Defect(\gamma) = \Defect(\gamma\TightenOut\sigma) + \Defect(\gamma\TightenIn\sigma)$.

Finally, let $\gamma'$ be the result of flipping the interior tangle.  The curve $\gamma'\TightenOut\sigma$ is just a reflection of $\gamma\TightenOut\sigma$, which implies that $\Defect(\gamma'\TightenOut\sigma) = \Defect(\gamma\TightenOut\sigma)$, and straightforward case analysis implies $\gamma'\TightenIn\sigma = \gamma\TightenIn\sigma$.  We conclude that
$\Defect(\gamma') = \Defect(\gamma'\TightenIn\sigma) + \Defect(\gamma'\TightenOut\sigma) = \Defect(\gamma\TightenIn\sigma) + \Defect(\gamma\TightenOut\sigma) = \Defect(\gamma)$.
\end{proof}

Lemmas~\ref{L:navigate}, \ref{L:flip1}, and \ref{L:flip2} now immediately imply Theorem~\ref{Th:flip-defect}.

\subsection{Back to planar electrical moves}

Each planar electrical transformation in a plane graph $G$ induces the same change in the medial graph~$G^\times$ as a finite sequence of 1- and 2-strand tangle flips (hereafter simply called “tangle flips”) followed by a single electrical move.
For an arbitrary connected multicurve $\gamma$, let \EMPH{$\bar{X}(\gamma)$} denote the minimum number of electrical moves in a mixed sequence of electrical moves and tangle flips that tightens $\gamma$.  Similarly, let \EMPH{$\bar{H}(\gamma)$} denote the minimum number of homotopy moves in a mixed sequence of homotopy moves and tangle flips that tightens $\gamma$.
We emphasize that tangle flips are “free” and do not contribute to either $\bar{X}(\gamma)$ or $\bar{H}(\gamma)$.

Our lower bound on planar electrical moves follows our earlier lower bound proof for facial electrical moves almost verbatim; the only subtlety is that the embedding of the graph can effectively change at every step of the reduction.
We repeat the arguments here to keep the presentation self-contained.

\begin{lemma}
\label{L:flip-smoothing}
$\bar{X}(\check\gamma) \le \bar{X}(\gamma)$ for every connected proper smoothing $\check\gamma$ of every connected multicurve $\gamma$ on the sphere.
\end{lemma}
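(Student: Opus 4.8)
The plan is to follow the proof of Lemma~\ref{L:smoothing} essentially verbatim, reusing its local case analysis (Figure~\ref{F:smooth-moves}), with tangle flips thrown in as free moves; the only new work is bookkeeping for the flips, and the only structural change is that there is no depth term on the sphere, so the inequality becomes strict. A proper smoothing of $\gamma$ exists only if $\gamma$ is non-simple, so I may assume $\gamma$ is non-simple; then $\gamma$ is not electrically reduced (by Lemma~\ref{L:electric-homotopy} together with the fact that a homotopically reduced connected multicurve on the sphere is a single simple closed curve), and no sequence of tangle flips alone changes the number of vertices, so $\bar{X}(\gamma)\ge 1$. The argument runs by induction on $\bar{X}(\gamma)$, with the vacuous case $\bar{X}(\gamma)=0$ as the base case. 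As in Lemma~\ref{L:smoothing}, I will prove the single-vertex case first and then reduce multi-vertex smoothings to it by induction on the number of smoothed vertices, using that un-smoothing a vertex never increases the number of components, so every intermediate smoothing between $\gamma$ and the connected curve $\overline\gamma$ may be taken connected.

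For the single-vertex case, I first dispose of tangle flips. A tangle flip is a homeomorphism of a disk whose tangle has at most two strands; it is reversible and free, so $\bar{X}$ is invariant under tangle flips. Moreover, smoothing a vertex never increases the number of strands of any tangle, and a \emph{connected} smoothing has no isolated loop; hence a tangle flip of $\gamma$ restricts to a legal tangle flip of $\overline\gamma$, and smoothing $x$ and then flipping agrees with flipping and then smoothing the image of $x$. Consequently, given a minimum mixed sequence for $\gamma$, I can push the smoothing past its leading block of flips and replace $(\gamma,\overline\gamma)$ by a flip-equivalent pair with the same values of $\bar{X}$, with $\overline\gamma$ still connected, and with a minimum mixed sequence whose first move is a medial electrical move $\gamma\to\gamma'$. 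Since this first move is a medial electrical move, $\bar{X}(\gamma')<\bar{X}(\gamma)$.

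Now the local case analysis from the proof of Lemma~\ref{L:smoothing} applies unchanged: it depends only on the type of the move $\gamma\to\gamma'$ and on how $x$ sits relative to it, not on whether the ambient surface is the sphere or the annulus, except that connectivity of $\overline\gamma$ rules out the one $\arc01$ subcase that would disconnect the curve. It produces a connected multicurve $\overline{\gamma}'$, reachable from $\overline\gamma$ by at most one medial electrical move, with either $\overline{\gamma}'=\overline\gamma=\gamma'$ or $\overline{\gamma}'$ a connected proper smoothing of $\gamma'$ at one or two vertices. In the first case $\bar{X}(\overline\gamma)=\bar{X}(\gamma')<\bar{X}(\gamma)$. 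In the second case, because $\bar{X}(\gamma')<\bar{X}(\gamma)$ the induction hypothesis gives $\bar{X}(\overline{\gamma}')\le\bar{X}(\gamma')-1$, and together with $\bar{X}(\overline\gamma)\le\bar{X}(\overline{\gamma}')+1$ this yields $\bar{X}(\overline\gamma)\le\bar{X}(\gamma')<\bar{X}(\gamma)$. Undoing the flip reduction returns the strict inequality for the original $\gamma$ and $\overline\gamma$, and the multi-vertex case then follows by the secondary induction described above.

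The shared case analysis is the routine part, already carried out for Lemma~\ref{L:smoothing}. The main obstacle is the flip bookkeeping in the second paragraph: I need $\bar{X}$ to be genuinely flip-invariant, a flip of $\gamma$ to induce a flip of $\overline\gamma$ (this is where connectedness of $\overline\gamma$ is used, to exclude an isolated loop produced by smoothing a self-crossing inside a flipped disk), and the tail of the pushed sequence to remain a minimum mixed sequence for the new curve. Handling all leading flips at once, rather than recursing on them one at a time, is what keeps the induction on $\bar{X}$ well-founded despite flips not decreasing $\bar{X}$.
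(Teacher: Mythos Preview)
Your proposal is correct and follows essentially the same route as the paper: induct on $\bar{X}(\gamma)$, absorb the leading block of tangle flips (using that flips are free and reversible, and that a flip of $\gamma$ induces a flip of $\overline{\gamma}$ with the smoothed vertex carried along), then reuse the case analysis of Figure~\ref{F:smooth-moves} on the first genuine medial electrical move, and finish the multi-vertex case by a secondary induction.  The paper's proof is terser on the flip bookkeeping---it simply asserts that applying the same flips to $\overline{\gamma}$ yields a connected multicurve with the same $\bar{X}$ value---whereas you spell out why flips commute with smoothing and why connectedness survives; but the structure, the induction, and the imported case analysis are identical.
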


\begin{proof}
Let $\gamma$ be a connected multicurve, and let $\check\gamma$ be a connected proper smoothing of $\gamma$.  The proof proceeds by induction on $\bar{X}(\gamma)$.  If $\bar{X}(\gamma)=0$, then $\gamma$ is already tight, so the lemma is vacuously true.

First, suppose $\check\gamma$ is obtained from $\gamma$ by smoothing a single vertex~$x$.  Consider an optimal mixed sequence of tangle flips and electrical moves that tightens $\gamma$.  This sequence starts with zero or more tangle flips, followed by a electrical move.  Let $\gamma'$ be the multicurve that results from the initial sequence of tangle flips; by definition, we have $\bar{X}(\gamma) = \bar{X}(\gamma')$.
Moreover, applying the same sequence of tangle flips to $\check\gamma$ yields a connected multicurve ${\check\gamma}'$ such that $\bar{X}(\check\gamma) = \bar{X}({\check\gamma}')$.  Thus, we can assume without loss of generality that the first operation in the sequence is an electrical move.

Now let~$\gamma'$ be the result of this move; by definition, we have $\bar{X}(\gamma) = \bar{X}(\gamma')+1$.  As in the proof of Lemma \ref{L:smoothing-annulus}, there are several subcases to consider, depending on whether the move from $\gamma$ to $\gamma'$ involves the smoothed vertex~$x$, and if so, the specific type of move.
In every subcase, by Lemma~\ref{L:smoothing-case} we can apply at most one electrical move to $\check\gamma$ to obtain a (possibly trivial) smoothing ${\check\gamma}'$ of $\gamma'$, and then apply the inductive hypothesis on $\gamma'$ and ${\check\gamma}'$ to prove the statement.  We omit the straightforward details.

Finally, if $\check\gamma$ is obtained from $\gamma$ by smoothing more than one vertex, the lemma follows immediately by induction from the previous analysis.
\end{proof}

\begin{lemma}
\label{L:flip-monotonicity}
For every connected multicurve $\gamma$, there is an intermixed sequence of electrical moves and tangle flips that tightens $\gamma$ that contains exactly $\bar{X}(\gamma)$ electrical moves, and does not contain $\arc01$ or $\arc12$ moves.
\end{lemma}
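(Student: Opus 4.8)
The plan is to mimic the proof of Lemma~\ref{L:min-monotonicity} almost verbatim, using the ``free tangle flip'' version of the smoothing lemma (Lemma~\ref{L:flip-smoothing}) in place of Lemma~\ref{L:smoothing}. First I would fix an optimal mixed sequence $\Sigma$ of medial electrical moves and tangle flips that reduces $\gamma$ to a simple closed curve, containing exactly $\bar{X}(\gamma)$ medial electrical moves. For each $i \ge 0$, let $\gamma_i$ denote the multicurve obtained after the first $i$ \emph{medial electrical moves} of $\Sigma$ (together with whatever tangle flips precede or are interleaved with them); since tangle flips never change the number of vertices, each $\gamma_i$ is well-defined up to a sequence of flips, and $\bar{X}(\gamma_i) = \bar{X}(\gamma) - i$.

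Next I would argue that no medial electrical move in $\Sigma$ can be a $\arc01$ or $\arc12$ move. Suppose the $i$th medial electrical move is such a move, so $\gamma_i$ has one more vertex than $\gamma_{i-1}$; then $\gamma_{i-1}$ is a connected proper smoothing of $\gamma_i$ (after undoing the extra vertex created by the $\arc01$ or $\arc12$ move — this is exactly one smoothing). Lemma~\ref{L:flip-smoothing} gives $\bar{X}(\gamma_{i-1}) < \bar{X}(\gamma_i)$, i.e.\ $\bar{X}(\gamma) - (i-1) < \bar{X}(\gamma) - i$, which is absurd. Hence every medial electrical move in $\Sigma$ is of type $\arc10$, $\arc21$, or $\arc33$, and the remaining claims (reduces to a simple curve, exactly $\bar{X}(\gamma)$ medial moves) hold by construction.

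One small point I would want to make precise: the hypothesis of Lemma~\ref{L:flip-smoothing} requires $\overline{\gamma}$ to be \emph{connected}. When $\gamma_i$ is obtained from $\gamma_{i-1}$ by a $\arc01$ move, the inverse operation is a smoothing, and I would note that this smoothing is automatically connected whenever $\gamma_{i-1}$ is — since $\gamma_{i-1}$ is itself a curve appearing in a sequence that starts from the connected multicurve $\gamma$ and only ever applies homotopy-style moves and tangle flips, connectedness is preserved throughout. For the $\arc12$ case the smoothed curve is $\gamma_{i-1}$ directly, so there is nothing to check. I do not anticipate a genuine obstacle here; the only mild subtlety, as the paper itself flags, is that the ambient embedding is effectively changing at every flip, so one must be careful to treat tangle flips as genuinely cost-free and to track multicurves up to flips rather than pointwise — but this is already baked into the definition of $\bar{X}$ and into Lemma~\ref{L:flip-smoothing}, so the argument goes through as stated.
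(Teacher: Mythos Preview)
Your proposal is correct and follows essentially the same approach as the paper's proof: take an optimal mixed sequence, observe that a $\arc01$ or $\arc12$ move would make the predecessor $\gamma_{i-1}$ a connected proper smoothing of $\gamma_i$, and invoke Lemma~\ref{L:flip-smoothing} to obtain $\bar{X}(\gamma_{i-1}) < \bar{X}(\gamma_i)$, contradicting optimality.  The paper's own proof is terser (it does not spell out the equality $\bar{X}(\gamma_i) = \bar{X}(\gamma) - i$ or the connectedness check), but the argument is the same.
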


\begin{proof}
Consider an optimal sequence of electrical moves and tangle flips that tightens~$\gamma$, and let~$\gamma_i$ denote the result of the first $i$ moves in this sequence.  If any $\gamma_i$ has more vertices than its predecessor $\gamma_{i-1}$, then $\gamma_{i-1}$ is a connected proper smoothing of $\gamma_i$, and Lemma~\ref{L:flip-smoothing} implies a contradiction.
\end{proof}

\begin{lemma}
\label{L:flip-XH}
$\bar{X}(\gamma) \ge \bar{H}(\gamma)$ for every closed curve $\gamma$ on the sphere.
\end{lemma}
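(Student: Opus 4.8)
The plan is to transcribe the proof of Lemma~\ref{L:ineq} to the sphere, replacing its annulus-specific ingredients with the sphere analogues already in hand—Lemma~\ref{L:flip-smoothing} (smoothing never increases $\bar{X}$) and Lemma~\ref{L:flip-monotonicity} (an optimal mixed sequence avoiding $\arc01$ and $\arc12$ moves)—and observing that on the sphere there is no depth term to carry along. The argument is induction on $\bar{X}(\gamma)$. For the base case $\bar{X}(\gamma)=0$, some mixed sequence reduces $\gamma$ to a simple closed curve using only tangle flips; since a $1$- or $2$-strand tangle flip is a homeomorphism of the flipped disk it preserves the vertex count of $\gamma$, so $\gamma$ is already simple, whence $\bar{H}(\gamma)=0$ and the inequality is trivial.

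For the inductive step, fix an optimal mixed sequence $\Sigma$ as supplied by Lemma~\ref{L:flip-monotonicity}: it contains exactly $\bar{X}(\gamma)$ medial electrical moves, none of type $\arc01$ or $\arc12$, and it ends at a simple closed curve. A tangle flip is inverted by another tangle flip and counts toward neither $\bar{X}$ nor $\bar{H}$, so sliding past the initial run of tangle flips in $\Sigma$ reaches a curve with the same $\bar{X}$ and $\bar{H}$ as $\gamma$; thus without loss of generality the first operation of $\Sigma$ is a medial electrical move, necessarily of type $\arc10$, $\arc21$, or $\arc33$. Write $\gamma'$ for its result; optimality of $\Sigma$ forces $\bar{X}(\gamma')=\bar{X}(\gamma)-1$. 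If the first move is $\arc10$ or $\arc33$ it is itself a homotopy move, so $\bar{H}(\gamma)\le\bar{H}(\gamma')+1$, and together with the inductive hypothesis $\bar{H}(\gamma')\le\bar{X}(\gamma')$ this yields $\bar{H}(\gamma)\le\bar{X}(\gamma)$.

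The only case with content is a $\arc21$ first move, handled exactly as in Lemma~\ref{L:ineq}: let $\gamma^{\circ}$ be obtained from $\gamma$ by performing the $\arc20$ homotopy move on the same empty bigon. Because $\gamma$ is a single closed curve, so is $\gamma^{\circ}$, and $\gamma^{\circ}$ is a connected proper smoothing of $\gamma'$, since it smooths the lone crossing that the $\arc21$ move created. Lemma~\ref{L:flip-smoothing} then gives $\bar{X}(\gamma^{\circ})<\bar{X}(\gamma')<\bar{X}(\gamma)$, so the inductive hypothesis applies to $\gamma^{\circ}$, and $\bar{H}(\gamma)\le\bar{H}(\gamma^{\circ})+1\le\bar{X}(\gamma^{\circ})+1\le\bar{X}(\gamma')=\bar{X}(\gamma)-1\le\bar{X}(\gamma)$, completing the induction. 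The one delicate point—and it is no harder than in Lemma~\ref{L:ineq}—is confirming in this case that the competing $\arc20$ move keeps the curve connected, so that Lemma~\ref{L:flip-smoothing} is applicable; the \emph{strict} inequality of that lemma then comfortably absorbs the ``$+1$'' from the $\arc20$ move, which is precisely why no depth-correction term of the kind carried in the annulus version is needed here. Everything else is a verbatim copy of the facial lower-bound argument, with ``tangle flips are free'' threaded through both $\bar{X}$ and $\bar{H}$.
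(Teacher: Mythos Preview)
Your proof is correct and follows essentially the same route as the paper's: induction on $\bar{X}(\gamma)$, invoking Lemma~\ref{L:flip-monotonicity} to avoid $\arc01$ and $\arc12$ moves, handling $\arc10$ and $\arc33$ directly as homotopy moves, and in the $\arc21$ case replacing the move by a $\arc20$ and applying Lemma~\ref{L:flip-smoothing} to the resulting connected proper smoothing of $\gamma'$. The only cosmetic difference is that you normalize away the initial run of tangle flips (arguing they preserve both $\bar{X}$ and $\bar{H}$), whereas the paper simply lets $\gamma'$ absorb them; the logic is identical.
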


\begin{proof}
Let $\gamma$ be a closed curve on the sphere.  The proof proceeds by induction on $\bar{X}(\gamma)$.  If $\bar{X}(\gamma) = 0$, then $\gamma$ is simple and thus $\bar{H}(\gamma) = 0$, so assume otherwise.

Consider an optimal sequence of electrical moves and tangle flips that tightens $\gamma$, and let $\gamma_i$ be the curve obtained by applying a prefix of the sequence up to and including the first electrical move.  The minimality of the sequence implies that $\bar{X}(\gamma) = \bar{X}(\gamma')+1$.
By Lemma~\ref{L:flip-monotonicity}, we can assume without loss of generality that the first electrical move in the sequence is neither $\arc01$ nor $\arc12$, and if this first electrical move is $1\arcto 0$ or $3\arcto 3$, the theorem immediately follows by induction.

The only remaining move to consider is $\arc21$.  Let $\gamma^\circ$ denote the result of applying the same sequence of tangle flips to $\gamma$, but replacing the final $\arc21$ move with a $\arc20$ move, or equivalently, smoothing the vertex of $\gamma'$ left by the final $\arc21$ move.
We immediately have $\bar{H}(\gamma) \le \bar{H}(\gamma^\circ) + 1$.  Because $\gamma^\circ$ is a connected proper smoothing of~$\gamma'$, Lemma~\ref{L:flip-smoothing} implies $\bar{X}(\gamma^\circ) < \bar{X}(\gamma') = \bar{X}(\gamma)-1$.  Finally, the inductive hypothesis implies that $\bar{X}(\gamma^\circ) \ge \bar{H}(\gamma^\circ)$, which completes the proof.
\end{proof}

\begin{lemma}
\label{L:flip-Hdefect}
$\bar{H}(\gamma) \ge \abs{\Defect(\gamma)}/2$ for every closed curve $\gamma$ on the sphere.
\end{lemma}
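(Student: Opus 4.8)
The plan is to track the defect along an optimal mixed sequence and use the fact that the only operations that can change it are homotopy moves, each by at most~$2$. Concretely, let $\Sigma$ be a sequence of homotopy moves and tangle flips, containing exactly $\bar{H}(\gamma)$ homotopy moves, that transforms $\gamma$ into a simple closed curve $\gamma^*$. Write $\gamma = \gamma_0, \gamma_1, \dots, \gamma_m = \gamma^*$ for the successive curves produced by the individual operations in $\Sigma$.

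The core of the argument is a telescoping estimate. By the triangle inequality,
\[
	\abs{\Defect(\gamma)}
	~=~ \abs{\Defect(\gamma_0) - \Defect(\gamma_m)}
	~\le~ \sum_{i=1}^{m} \abs{\Defect(\gamma_{i-1}) - \Defect(\gamma_i)},
\]
using that $\Defect(\gamma^*) = 0$ because $\gamma^*$ is simple. Now I split the terms on the right according to the type of the $i$th operation. If the $i$th operation is a $1$- or $2$-strand tangle flip, then Lemma~\ref{L:flip1} or Lemma~\ref{L:flip2} gives $\Defect(\gamma_{i-1}) = \Defect(\gamma_i)$, so that term is zero. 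If the $i$th operation is a homotopy move, then $\abs{\Defect(\gamma_{i-1}) - \Defect(\gamma_i)} \le 2$, as noted in Section~2 (and detailed in the case analysis of our earlier paper). Hence the sum is at most $2$ times the number of homotopy moves in $\Sigma$, namely $2\,\bar{H}(\gamma)$, which yields $\abs{\Defect(\gamma)} \le 2\,\bar{H}(\gamma)$ and therefore the claimed bound.

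I do not expect a real obstacle here: the entire argument is the observation that tangle flips are defect-neutral while homotopy moves move defect by at most~$2$, combined with the triangle inequality along the reduction sequence. The only points requiring care are (i) citing the right invariance lemmas for the two tangle-flip types, and (ii) confirming that the endpoint of the reduction is genuinely simple and hence has defect exactly zero, which is immediate from the definition of $\bar{H}$ and the fact that every simple closed curve has defect zero.
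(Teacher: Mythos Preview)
Your proof is correct and follows essentially the same approach as the paper's: both arguments observe that tangle flips preserve defect (Lemmas~\ref{L:flip1} and~\ref{L:flip2}), that each homotopy move changes defect by at most~$2$, and that the terminal simple curve has defect zero. Your version merely spells out the telescoping sum that the paper leaves implicit.
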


\begin{proof}
Each homotopy move decreases $\abs{\Defect(\gamma)}$ by at most $2$, and Lemmas~\ref{L:flip1} and~\ref{L:flip2} imply that tangle flips do not change $\abs{\Defect(\gamma)}$ at all.  Every simple curve has defect $0$.
\end{proof}


\begin{theorem}
\label{Th:lowerbound-planar}
Let $G$ be an arbitrary planar graph, and let $\gamma$ be any unicursal smoothing of~$G^\times$ (defined with respect to any planar embedding of $G$).  Reducing~$G$ to a single vertex requires at least $\abs{\Defect(\gamma)}/2$ planar electrical transformations.
\end{theorem}

\begin{proof}
The minimum number of planar electrical transformations required to reduce $G$ is at least $\bar{X}(G^\times)$.
Because~$\gamma$ is a single curve, it must be connected, so Lemma~\ref{L:flip-smoothing} implies that $\bar{X}(G^\times) \ge \bar{X}(\gamma)$.  The theorem now follows immediately from Lemmas \ref{L:flip-XH} and \ref{L:flip-Hdefect}.
\end{proof}

Finally, Hayashi \etal~\cite{hhsy-musrm-12} and Even-Zohar \etal~\cite{ehln-irkl-16} describe infinite families of planar closed curves with defect $\Omega(n^{3/2})$; see also \cite[Section 2.2]{tangle}.

\begin{corollary}
\label{C:lowerbound}
Reducing any $n$-vertex planar graph to a single vertex requires $\Omega(n^{3/2})$ planar electrical transformations in the worst case.
\end{corollary}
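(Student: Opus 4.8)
The statement to prove is Corollary~\ref{C:lowerbound}: reducing any $n$-vertex planar graph to a single vertex requires $\Omega(n^{3/2})$ planar electrical transformations in the worst case. Before looking at the author's proof, here is the plan I would follow.

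\medskip

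The plan is to combine Theorem~\ref{Th:lowerbound} with a known construction of planar curves having large defect. Theorem~\ref{Th:lowerbound} already does the heavy structural lifting: for any planar graph $G$ and any unicursal smoothing $\gamma$ of its medial graph $G^\times$, reducing $G$ to a single vertex costs at least $\abs{\Defect(\gamma)}/2$ planar electrical transformations. So all that remains is to exhibit, for infinitely many $n$, an $n$-vertex planar graph $G$ whose medial graph admits a unicursal smoothing $\gamma$ with $\abs{\Defect(\gamma)} = \Omega(n^{3/2})$. First I would invoke the infinite families of planar closed curves with defect $\Omega(m^{3/2})$ on $m$ vertices constructed by Hayashi \etal~\cite{hhsy-musrm-12} and Even-Zohar \etal~\cite{ehln-irkl-14} (see also \cite[Section~2.2]{tangle}). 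Each such curve $\gamma_m$, being a generic closed curve on the sphere, is the medial graph of some plane graph $G_m$ with $\Theta(m)$ edges, hence $\Theta(m)$ vertices in $G_m^\times$; equivalently, we may take $G_m$ itself to be the plane graph whose medial graph realizes $\gamma_m$. Then $\gamma = \gamma_m$ is (trivially) its own unicursal smoothing, and Theorem~\ref{Th:lowerbound} gives a lower bound of $\abs{\Defect(\gamma_m)}/2 = \Omega(m^{3/2})$ planar electrical transformations. Writing $n$ for the number of vertices of $G_m$, we have $n = \Theta(m)$, so the bound is $\Omega(n^{3/2})$, as claimed.

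\medskip

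The key steps in order are: (1) recall that every generic closed curve on the sphere is the medial graph of some plane graph, so a curve with $m$ vertices yields a planar graph on $\Theta(m)$ vertices; (2) cite the existence of planar closed curves $\gamma_m$ with $\abs{\Defect(\gamma_m)} = \Omega(m^{3/2})$; (3) for such a $\gamma_m$, let $G_m$ be an underlying plane graph, note $\gamma_m$ is a unicursal smoothing of $G_m^\times$, and apply Theorem~\ref{Th:lowerbound}; (4) translate the bound from $m$ (curve vertices) to $n$ (graph vertices) using $n = \Theta(m)$. There is essentially no calculation; the entire content is the chain of implications.

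\medskip

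I expect the only genuine subtlety — and hence the "main obstacle," though it is minor — to be bookkeeping about what plays the role of $n$. One must be slightly careful that the worst-case bound is stated over $n$-vertex \emph{planar graphs}, while the defect lower bounds in the literature are stated over $n$-vertex \emph{curves}; since the medial construction changes the vertex count only by a constant factor, this is harmless, but it should be stated explicitly rather than glossed over. A secondary point worth a sentence is that Theorem~\ref{Th:lowerbound} requires a \emph{unicursal} smoothing to exist; taking $G_m$ to be exactly the plane graph whose medial graph is the single curve $\gamma_m$ makes $G_m$ unicursal outright, so $\gamma_m$ serves as its own unicursal smoothing and the hypothesis is met trivially. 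With those two remarks in place, the corollary follows in a couple of lines.
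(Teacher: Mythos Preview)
Your proposal is correct and follows essentially the same approach as the paper: the paper simply cites the curve families of Hayashi \etal\ and Even-Zohar \etal\ with defect $\Omega(n^{3/2})$ and declares the corollary immediate from Theorem~\ref{Th:lowerbound}. Your write-up just makes explicit the bookkeeping (that such a curve is the medial graph of some plane graph, is trivially its own unicursal smoothing, and that the vertex counts of the curve and the graph differ only by a constant factor) that the paper leaves implicit.
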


\section{Open problems}
\label{S:open-problems}

Our results suggest several open problems.  Perhaps the most compelling, and the primary motivation for our work, is to find either a subquadratic upper bound or a quadratic lower bound on the number of (unrestricted) electrical transformations required to reduce any planar graph without terminals to a single vertex.
Like Gitler~\cite{g-dtaa-91}, Feo and Provan~\cite{fp-dtert-93}, and Archdeacon \etal~\cite{acgp-frpwg-00}, we conjecture that $O(n^{3/2})$ \emph{facial} electrical transformations suffice.
However, proving the conjecture appears to be challenging.

Another direction is to prove a quadratic lower bound for graphs on surfaces with positive genus under \emph{crossing-free} electrical transformations.
To generalize Theorem~\ref{Th:flip-defect} to surface-embedded graphs, we need an extension of Lemma~\ref{L:navigate} to navigate through all the possible embeddings.  Using the theory of \emph{large-edgewidth (LEW) embeddings}, a result by Thomassen~{\cite[Theorem~6.1]{t-egnsn-90}} shows that any embedding of a surface-embedded graph can be obtained from the LEW-embedding (if there's one) by a finite sequence of split reflections.
From here it is not hard to construct a toroidal curve that admits an LEW-embedding and has quadratic defect.  The main difficulty is that we don't have a similar electrical-homotopy inequality for arbitrary surfaces.

Finally, none of our lower bound techniques imply anything about non-planar electrical transformations or about electrical reduction of non-planar graphs.  Indeed, the only lower bound known in the most general setting, for \emph{any} family of electrically reducible graphs, is the trivial $\Omega(n)$.
It seems unlikely that planar graphs can be reduced more quickly by using non-planar electrical transformations, but we can't prove anything.  Any non-trivial lower bound for this problem would be interesting.

\bibliographystyle{newuser}
\def\burl#1{$\langle$\url{#1}$\rangle$}

\newpage
\appendix

\section{Equivalence between electrical and homotopic tightness for primitive curves}
\label{SS:elec-homo-reduced}

A closed curve $\gamma$ is \EMPH{primitive} if $\gamma$ is not homotopic to a proper multiple of some other closed curve.  A multicurve is \emph{primitive} if all its constituent curves are primitive.
We show a five-way equivalence between electrical and homotopic tightness for primitive multicurves, which is implicit in the work by de Graaf and Schrijver \cite{s-hcscs-89,s-dgshct-91,s-otuok-92,s-cget-92,gs-chscc-95,gs-dgs-97,gs-mcmcr-97}.

Let $\gamma$ be a multicurve on an orientable surface $\Sigma$ such that each constituent curve of $\gamma$ is primitive.
Define the \EMPH{$\mu$\,-function} as
\[
\EMPH{$\mu(\gamma, \sigma)$} ~\coloneqq~ \min_{\substack{\sigma' \sim \sigma \\ \sigma' \pitchfork\, \gamma}} \Cr(\gamma, \sigma'),
\]
where $\Cr(\gamma, \sigma')$ is the number of crossings between $\gamma$ and $\sigma'$, and the minimum ranges over every closed curve~$\sigma'$ homotopic to the given closed curve $\sigma$ on $\Sigma$, intersecting $\gamma$ transversely.%
\footnote{In Schrijver \cite{s-otuok-92}, the $\mu$-function is defined with respect to the graph corresponding to $\gamma$ through medial construction; the function defined here is denoted as $\mu'$ in his paper.}
Denote \EMPH{$\mu_\gamma$} as the single-variable function $\mu(\gamma, \cdot)$.
The notion of $\mu$-function is deeply related to the \emph{representativity} or \emph{facewidth} of a graph studied in topological graph theory \cite{rs-gm7-88,rv-rse-90,t-egnsn-90}.



The $\mu$-function is a higher-genus analogue to the $\Depth$ function defined in the annulus (see Section~\ref{SS:tight-annulus}); in particular, both $\mu$ and $\Depth$ are invariant under isotopy of $\gamma$ and the electrical moves \cite{rv-rse-90}.

\begin{lemma}[Robertson and Vitray~{\cite[Proposition~14.4]{rv-rse-90}}]
\label{L:mu}
Electrical moves do not change $\mu_\gamma$ for any multicurve $\gamma$ on surface $\Sigma$.
\end{lemma}

\begin{proof}
For any face of $\gamma$ intersected by some closed curve $\sigma$ that could be deleted after an electrical move, exhaustive case analysis implies that there is another closed curve $\sigma'$ that avoids that face.
\end{proof}

Multicurve $\gamma$ satisfies \EMPH{simplicity conditions} \cite{s-dgshct-91} if
(1)
any lifting of $\gamma_i$ in the universal cover $\hat\Sigma$ does not self-intersect for any constituent curve $\gamma_i$ of $\gamma$, and
(2)
any distinct liftings of $\gamma_i$ and $\gamma_j$ in $\hat\Sigma$ intersect each other at most once for any pair of (possibly identical) constituent curves $\gamma_i$ and $\gamma_j$ of $\gamma$.
%
Multicurve $\gamma$ is \EMPH{minimally crossing} \cite{s-dgshct-91,s-otuok-92} if each constituent curve of $\gamma$ has minimum number of self-intersections in its homotopy class, and every pair of constituent curves has minimum intersections with each other, in their own homotopy classes.  In notation, one has
\[
\Cr(\gamma_i) ~=~ \min_{\gamma'_i \sim \gamma_i} \Cr(\gamma'_i)
\qquad \text{and} \qquad
\Cr(\gamma_i,\gamma_j) ~=~ \min_{\substack{\gamma'_i \sim \gamma_i \\ \gamma'_j \sim \gamma_j}}\Cr(\gamma'_i, \gamma'_j)
\]
for all constituent curves $\gamma_i$ and $\gamma_j$ of $\gamma$; $\Cr(\gamma_i)$ denotes the number of self-intersections of curve $\gamma_i$.
Multicurve $\gamma$ is \EMPH{crossing-tight} \cite{s-dgshct-91,s-otuok-92} if $\mu_\gamma \neq \mu_{\check\gamma}$ for any proper smoothing $\check\gamma$ of $\gamma$.


Our proof of equivalence relies on machineries developed extensively in the sequence of work by de Graaf and Schrijver \cite{s-hcscs-89,s-dgshct-91,s-otuok-92,s-cget-92,gs-chscc-95,gs-dgs-97,gs-mcmcr-97} who did all the weight-lifting.
However the original work does not address the problem of relating electrical and homotopy moves.

\begin{theorem}
\label{Th:tight-equiv}
Let $\gamma$ be a multicurve on an orientable surface whose constituent curves are all primitive.
The following statements are equivalent: (1) Multicurve $\gamma$ satisfies simplicity conditions, (2) $\gamma$ is minimally crossing, (3) $\gamma$ is crossing-tight, (4) $\gamma$ is e-tight, and (5) $\gamma$ is h-tight.
\end{theorem}

\begin{proof}
{$(1) \Leftrightarrow (2) \Leftrightarrow (3)$:}
Schrijver \cite[Proposition~12]{s-dgshct-91} showed that $\gamma$ satisfies simplicity conditions if and only if $\gamma$ is minimally crossing and each constituent curve is primitive.  Later in the same paper \cite[Theorem~5]{s-dgshct-91} he also showed that $\gamma$ is minimally crossing and each constituent curve is primitive if and only if $\gamma$ is crossing-tight.
An alternative proof using the monotonicity of homotopy process can be found in de Graaf's thesis \cite{g-gcs-94phdthesis}.

{$(3) \Rightarrow (4)$:}
In another paper Schrijver \cite[Theorem~2]{s-otuok-92} showed that two crossing-tight multicurves $\gamma$ and~$\gamma'$ can be transformed into each other using only $\arc33$ moves if (and only if) $\mu_\gamma = \mu_{\check\gamma}$.  This result implies that if multicurve $\gamma$ is crossing-tight then $\gamma$ is e-tight, as electrical moves preserves the $\mu$-function by Lemma~\ref{L:mu}.

{$(4) \Rightarrow (5)$:}
Any e-tight multicurve must be h-tight by de Graaf and Schrijver \cite{gs-mcmcr-97} (see Lemma~\ref{L:electric-homotopy}).

{$(5) \Rightarrow (1)$:}
If $\gamma$ is h-tight and primitive, then by Hass and Scott \cite[Lemma~3.4]{hs-ics-85} multicurve $\gamma$ satisfies simplicity conditions.
To elaborate, assume for contradiction that $\gamma$ violates the simplicity conditions.  As $\gamma$ is h-tight one can push each constituent curve of $\gamma$ close to its unique geodesic on the surface without even decreases the number of vertices, similar to the algorithm of de Graaf and Schrijver \cite{gs-mcmcr-97}.
Therefore all the intersections between lifts of constituent curves of $\gamma$ remains after the push.  The primitiveness of the curve $\gamma$ guarantees that each lift of any constituent curve does not self-intersect, and two different lifts of the same constituent curve intersects at most once on $\hat\Sigma$.
Between the lifts of two distinct geodesics there is at most one intersection in the universal cover, and thus the same holds for the lifts of two distinct constituent curves of $\gamma$.
%
%
This concludes the proof.
\end{proof}

\section{Proving Lemma~\ref{L:smoothing-case}}
\label{S:smoothing-proof}


\begin{proof}
We prove the statement by induction on the number of electrical moves in the sequence and the number of smoothed vertices.  If $\check\gamma = \gamma$ then the statement trivially holds.
Otherwise, we first consider the special case where $\check\gamma$ is obtained from $\gamma$ by smoothing a single vertex~$x$.  Without loss of generality let $\gamma'$ be the result of the first electrical move.  There are two nontrivial cases to consider.

First, suppose the move from $\gamma$ to $\gamma'$ does not involve the smoothed vertex $x$.  Then we can apply the same move to $\check\gamma$ to obtain a new multicurve ${\check\gamma}'$; the same multicurve can also be obtained from $\gamma'$ by smoothing~$x$.

\begin{figure}[ht]
\centering
\includegraphics[width=0.8\textwidth]{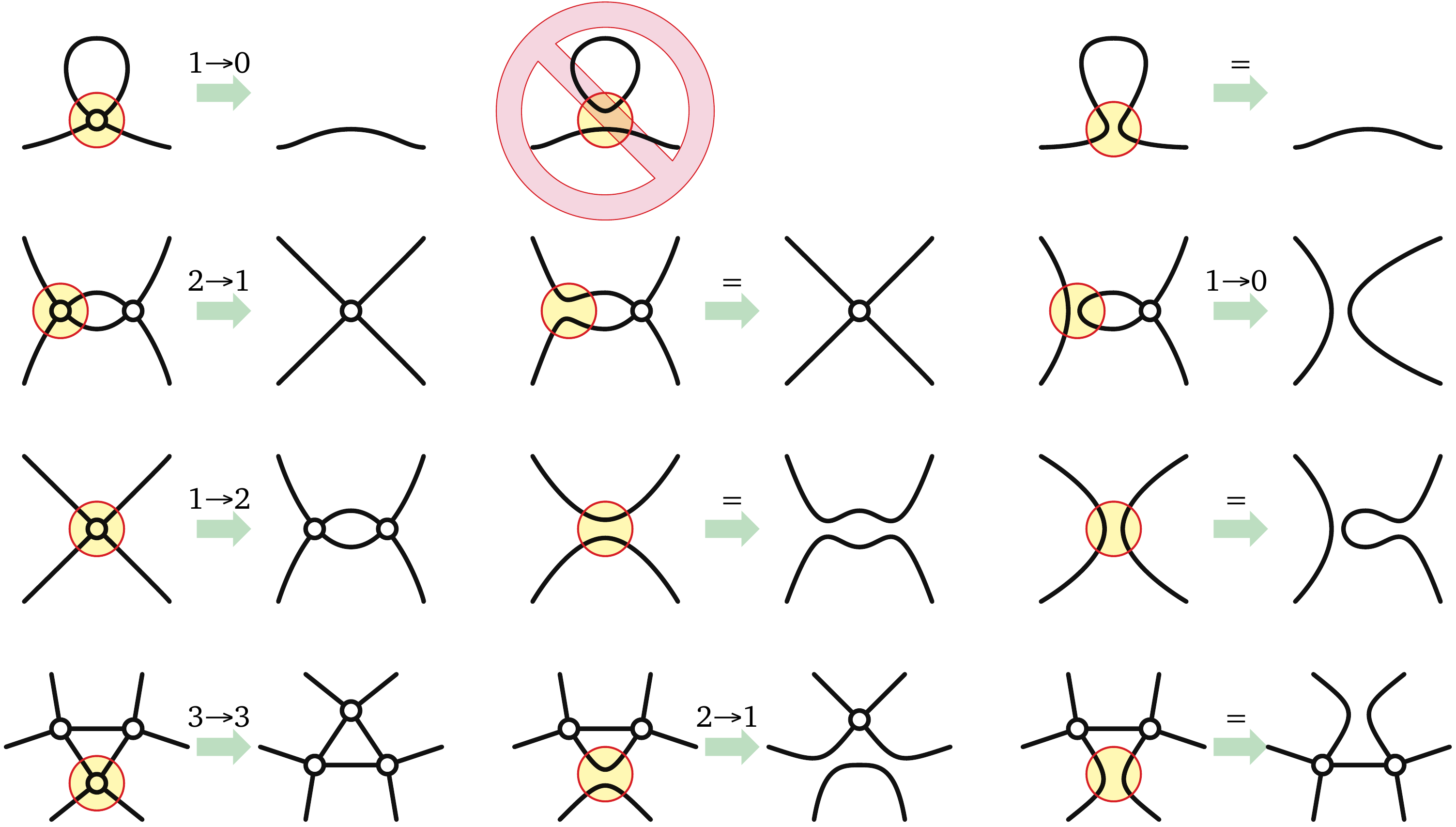}
\caption{Cases for the proof of the Lemma~\ref{L:smoothing-case};
the circled vertex is $x$.}
\label{F:smooth-moves}
\end{figure}

Now suppose the first move does involve $x$.  In this case, we can apply at most one electrical move to $\check\gamma$ to obtain a (possibly trivial) smoothing ${\check\gamma}'$ of $\gamma'$.
There are eight subcases to consider, shown in Figure \ref{F:smooth-moves}.
One subcase for the $\arc10$ move is impossible, because~$\check\gamma$ is connected.
In the remaining $\arc10$ subcase and one $\arc21$ subcase, the curves $\check\gamma$, ${\check\gamma}'$, and $\gamma'$ are all isomorphic.
In all remaining subcases, ${\check\gamma}'$ is a connected proper smoothing of $\gamma'$.

Finally, we consider the more general case where $\check\gamma$ is obtained from $\gamma$ by smoothing more than one vertex.  Let~$\tilde{\gamma}$ be any intermediate curve, obtained from~$\gamma$ by smoothing just one of the vertices that were smoothed to obtain~$\check\gamma$.  As $\check\gamma$ is a connected smoothing of $\tilde{\gamma}$, the curve $\tilde{\gamma}$ itself must be connected too.
Our earlier argument implies that there is a sequence of electrical moves that changes
$\tilde\gamma$ to a smoothing $\tilde\gamma'$ of $\gamma'$.
The inductive hypothesis implies that there is a sequence of electrical moves that changes
$\check\gamma$ to a smoothing $\check\gamma'$ of $\tilde\gamma'$, which is itself a smoothing of~$\gamma'$.  This completes the proof.
\end{proof}

\end{document}